\documentclass{article}
\usepackage{arxiv}
\usepackage{amsmath} 
\usepackage{amsthm}
\usepackage[utf8]{inputenc} 
\usepackage[T1]{fontenc}    
\usepackage{hyperref}       
\usepackage{url}            
\usepackage{booktabs}       
\usepackage{amsfonts}       
\usepackage{nicefrac}       
\usepackage{microtype}      
\usepackage{lipsum}		
\usepackage{graphicx}
\usepackage{enumitem}
\usepackage{wrapfig}
\usepackage{doi}
\usepackage[font=small]{caption}
\usepackage{listings}
\usepackage{xcolor}
\usepackage{titletoc}
\usepackage{placeins} 

\usepackage[
  backend=biber,
  style=numeric,
  sorting=none
]{biblatex}
\definecolor{codegreen}{rgb}{0,0.6,0}
\definecolor{codegray}{rgb}{0.5,0.5,0.5}
\definecolor{codepurple}{rgb}{0.58,0,0.82}
\definecolor{backcolour}{rgb}{0.95,0.95,0.92}

\lstdefinestyle{mystyle}{
  backgroundcolor=\color{backcolour}, commentstyle=\color{codegreen},
  keywordstyle=\color{magenta},
  numberstyle=\tiny\color{codegray},
  stringstyle=\color{codepurple},
  basicstyle=\ttfamily\footnotesize,
  breakatwhitespace=false,         
  breaklines=true,                 
  captionpos=b,                    
  keepspaces=true,                 
  numbers=left,                    
  numbersep=5pt,                  
  showspaces=false,                
  showstringspaces=false,
  showtabs=false,                  
  tabsize=2
}

\theoremstyle{definition}
\newtheorem{definition}{Definition}[section]

\newtheorem{conjecture}{Conjecture}[section]

\newtheorem{theorem}{Theorem}[section]

\title{Generalized saddle-node ghosts and their composite structures in dynamical systems}

\author{ \href{https://orcid.org/0000-0002-4893-1774}{\includegraphics[scale=0.06]{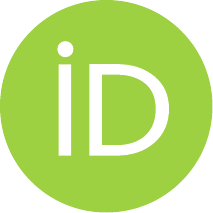}\hspace{1mm}Daniel~Koch}\thanks{To whom correspondence should be addressed.} \\
	Department of Pharmacology and Therapeutics\\
	University of Manitoba\\
    \&\\
    Institute of Cardiovascular Sciences\\
    St. Boniface Hospital, Albrechtsen Research Centre\\
	Winnipeg, MB, Canada \\
	\texttt{daniel.koch@umanitoba.ca} \\
	\And
	\href{https://orcid.org/0000-0003-4545-2687}{\includegraphics[scale=0.06]{orcid.pdf}\hspace{1mm}Akhilesh P.~Nandan} \\
	European Molecular Biology Laboratory \\
	  Barcelona, Spain \\
	\texttt{akhilesh.padmanabhan@embl.es} \\
}

\renewcommand{\shorttitle}{Generalized saddle-node ghosts and their composite structures}

\hypersetup{
pdftitle={Generalized saddle-node ghosts and their composite structures in dynamical systems},
pdfsubject={q-bio.NC, q-bio.QM},
pdfauthor={Daniel~Koch, Akhilesh P.~Nandan},
pdfkeywords={Ghost attractors, Bottlenecks, Ghost channels, Ghost cycles, Metastability, Transient dynamics, Saddle-node bifurcation, SNIC bifurcation, Dynamical Systems, Python package},
}

\begin{document}
\maketitle

\begin{abstract}
	The study of dynamical systems has long focused on the characterization of their asymptotic dynamics such as fixed points, limit cycles and other types of attractors and how these invariant sets change their properties as systems parameters change. More recently, however, the importance of transient dynamics, especially of long transients and sequential transitions between them, has been increasingly recognized in various fields including ecology, neuroscience, and cell biology. Among several possible origins of long transients, ghost attractors have received particular attention due to interesting dynamical properties in non-autonomous settings, new theoretical developments, and an increasing number of systems that empirically exhibit dynamics consistent with ghost attractors. Despite this growing interest in transient dynamics generally and ghost attractors in particular, there are significantly fewer theoretical concepts and software tools available to researchers to classify and characterize their underlying mechanisms compared to asymptotic dynamics. 
    To address this gap, we generalize saddle-nodes to account for higher-dimensional center manifolds and provide a definition for their ghost attractors. We then introduce algorithms to specifically identify and characterize ghost attractors and their composite structures such as ghost channels and ghost cycles and show how these concepts and algorithms can be used to gain new insights into the transient dynamics of a wide range of system models focusing on living systems, allowing, e.g., to describe bifurcations of ghosts. The algorithms are implemented in Python and available as {\tt PyGhostID}, a user-friendly open-source software package.
    
\end{abstract}

\keywords{Ghost attractors\and Bottlenecks\and Ghost channels\and Ghost cycles\and Metastability\and Transient dynamics\and Saddle-node bifurcation\and SNIC bifurcation\and Dynamical Systems\and Python package}

\begin{refsection}[references_clean.bib]
\section{Introduction}

Much of dynamical systems research focuses on understanding a system of interest through the lens of asymptotic dynamics, i.e., by studying how invariant sets such as fixed points, limit cycles and other types of attractors change as systems parameters are varied. Examples of this approach include tipping of climate- and ecosystems into undesirable states via passing through saddle-node bifurcations \cite{Scheffer1993-hn,Ibelings2007,Schmitt2019,Wunderling2021,Lohmann2024-pk}, the birth of the heart beat as emergence of a limit cycle \cite{Jia2023-wo} and its degeneration into cardiac arrhythmia as transition to chaos \cite{Garfinkel1992,Weiss1999,Sato2010}, or cellular decisions as the selection of distinct attractors of multi-stable molecular regulatory networks \cite{Xiong2003-ah,Ozbudak2004-vv,Byrne2016-kw}. Although studying attractors and their transitions between them has 
indisputably generated crucial insights into innumerable systems across the sciences, there are limitations to this approach. Indeed, many real-world systems often do not exhibit true asymptotic dynamics, e.g., due to frequent influences from their surroundings that render them non-autonomous. Some forms of cardiac arrhythmia, for instance, may best be described by transients that occur during transitions between steady states triggered by a sudden input signal rather than as asymptotic dynamics \cite{Xie2014}. In other cases, systems seemingly reside in a stable state before suddenly shifting towards a qualitatively different regime even in absence of any change of parameters or external inputs to the system, challenging the conceptualization of such dynamics as attractors. The importance of transient dynamics, especially of long transients - or \textit{metastable} states - and sequential transitions between them, is thus increasingly recognized in various fields, including ecology \cite{Hastings_2018,Morozov2024}, neuroscience \cite{Rossi_2025,Hancock2024} and cell biology \cite{Farjami_2021,Kelsh2021,Koch_2024}. 

Several mechanisms have been put forth to explain the emergence of long transients (see, e.g., \cite{Hastings_2018} and \cite{Rossi_2025} for reviews). Historically the first and mathematically best understood one among these is the explicit separation of timescales, resulting in what are called today slow-fast or multi-timescale systems \cite{Tikhonov1952,Verhulst2005,Kuehn_2015,Bertram2017}. In such systems, one or more processes take place on a timescale that typically differs by an order of magnitude or more and is often accounted for with a small parameter $\epsilon>0$, scaling the rate of change for one of the system variables. Using geometric singular perturbation theory, the slow dynamics in such systems can be studied as the flow on the so-called \textit{critical manifold} in the limit of $\epsilon=0$, the results of which can be shown to hold for the \textit{slow manifold} at $\epsilon>0$ due to Fenichel theory \cite{Fenichel_1971,Fenichel_1979}. While not all slow-fast or multi-timescale systems give rise to long transients, the cyclical visitation of slow manifolds in many oscillatory systems such as the FitzHugh-Nagumo model or the van der Pol oscillator naturally gives rise to alternating sequences of long transients. A different mechanism frequently invoked in ecology and neuroscience is the visitation of saddles by the system's phase space trajectory, either in the form of saddle "crawl-bys" \cite{Hastings_2018} or heteroclinic channels/cycles \cite{May1975}, causing the system to slow down in the vicinity of the visited saddle fixed point. Heteroclinic channels, cycles and networks are particularly suited for modeling sequential transitions between multiple transient states and have found applications, e.g., in neuroscience \cite{Rabinovich2000,Afraimovich2004} and robotic control \cite{Horchler2015}.

Another notable mechanism, and the focus of this paper, is the slowing down due to a \textit{bottleneck} or \textit{ghost} \cite{Strogatz1989,Strogatz_2018} when a system is close to a saddle-node (SN) bifurcation. From a geometrical perspective, the slow dynamics results from a plateau in the potential landscape of the system which is a remnant of a former attractor close-by in parameter space \cite{Stanoev_2020,Koch_2024}. Ghosts are a well-known theoretical mechanism for long transients underpinned by a growing body of empirical evidence in many systems including shallow-water lakes \cite{VanGeest2007}, neuronal dynamics \cite{Izhikevich_2006}, and cells \cite{Stanoev2018,RodrguezMaroto2025,Mercadal2026}. Being close to bifurcation, ghosts are particularly interesting in the context of non-autonomous systems, i.e., when system parameters or inputs are explicitly time-varying, often resulting in quite different responses compared to pure slow-fast systems \cite{Koch2025}. This feature has also been shown to improve information processing capabilities in biological systems \cite{Stanoev_2020,Moor2023,Choi_2024}. More recently, we showed that multiple ghosts can be connected to each other to form larger composite structures, which we termed \textit{ghost channels} and \textit{ghost cycles} \cite{Koch_2024}, providing an alternative description of sequential long transients. The trapping times in the presence of noise and the response of these objects to periodic forcing are quite different from heteroclinic structures \cite{Koch_2024,Rabinovich2006} and slow-fast systems \cite{Koch2025}, respectively.

Despite this variety of mechanisms, the theoretical and the computational tools available to researchers for characterizing transient dynamics are sparse compared to the tools available for asymptotic dynamics. While the lack of theoretical tools has sparked several promising new research avenues \cite{Liu2022,Heggerud2024,Zheng2025}, the lack of computational tools persists. While many powerful and widely used software solutions exist, e.g., for the characterization of attractors and their bifurcations \cite{doedel1981,Ermentrout2002,Dhooge2003,veltz:hal-02902346,Datseris2023}, the few algorithms and software solutions available to characterize transient dynamics (e.g., for numerical approximation of slow-manifolds via Computational Singular Perturbation \cite{Galassi2022}) are often rather domain specific or buried in sparsely documented research code. Thus, almost no widely applicable, user-friendly plug-and-play solutions for the characterization of transient dynamics in arbitrary dynamical systems exist to date, requiring researchers to resort to time-consuming manual analysis. 

In this article, we address this gap for one of the possible mechanisms underlying long transients and sequential transitions between them: ghost states and their composite structures. By giving a novel generalized definition of ghost states, we provide a new theoretical framework that allows to better describe ghosts as phase space objects and phenomena associated with ghosts, many of which have not been reported before. Based on this framework, we introduce a series of novel algorithms for identifying and characterizing ghost states and their composite structures (e.g., ghost channels \cite{Koch_2024}) along system trajectories, enabling us to obtain new insights into the transient dynamics, e.g., of neuronal and gene regulatory network models and to discover novel transient dynamics phenomena such as bifurcations between ghosts.

The remainder of this article is organized as follows. In \autoref{subsec:framework}, we review the basic concepts required to understand ghost states and their properties before giving a new formal definition that captures these features and generalize it to higher dimensional ghosts. Following the outline of the basic framework, in \autoref{subsec:algorithm} we describe the main algorithm of this paper used to identify and characterize ghost states from trajectories and how it is embedded in the PyGhostID package. In \autoref{subsec:validation}, we validate the algorithm, showing that GhostID specifically identifies ghost states but not saddles or slow-fast dynamics as origins of long transients. In \autoref{subsec:applications}, we show how our concepts and algorithms can be used to study various dynamical systems models with a focus on living systems. We conclude in \autoref{sec:discussion} by discussing current limitations of our approach, implications of the findings presented here, and avenues for future research.

\section{Results}\label{sec:results}
\subsection{Theoretical framework}\label{subsec:framework}

Before considering the non-asymptotic dynamics of ghost states, we briefly review some basic concepts about how the flow of a dynamical system is organized by fixed points (equilibria) as some of the most basic objects that govern a system's asymptotic dynamics. From a geometric perspective, fixed points are important because they direct the flow along defined directions and thereby determine a trajectory's fate. Given an autonomous dynamical system $\dot{x} = f(x,\rho)$ with parameters $\rho$, a fixed point is simply defined as any $x^*$ with $f(x^*,\rho)=0$. Different types of fixed points are classified by how the system's dynamics is organized in their vicinity. \textit{Stable} fixed points (also called attractors or sinks) attract all trajectories from their neighborhood, whereas \textit{unstable} fixed points (also called repellers or sources) direct trajectories to move away from the fixed point. As a simple example, consider the following system for $\mu <0$:
\begin{equation}
\dot{x} = f(x,\mu)=\mu + x^2. \label{eq:1}
\end{equation}
Visualizing the dynamics by plotting $\dot{x}$ vs $x$, we immediately see that the system has two fixed points, one at $x_s^* = -\sqrt{\mu}$ and one at $x_u^* = \sqrt{\mu}$ (\autoref{fig:fig1}a). Inferring the direction of flow by considering the value of $\dot{x}$ in the vicinity of $x_s^*$ (black arrowheads in \autoref{fig:fig1}a), we can see that trajectories will move towards $x_s^*$, hence $x_s^*$ is a stable fixed point. By the same kind of reasoning, we find that $x_u^*$ is unstable. In higher dimensions, sinks and repellers exist, too, but are complemented by saddles, defined to have one repelling and one attracting direction on the plane, or $k$, $0<k<n$, repelling and $n-k$ attracting directions in $n$ dimensions, respectively (\autoref{fig:fig1}b).

The notion of stability and the classification of fixed points can be made precise by linear stability analysis, in which a system is linearized around the fixed point and classified by the eigenvalues $\lambda_1,\dots,\lambda_n$ of the Jacobian matrix $D_x f(x^*)$ at the fixed point: a fixed point $x^*$ is a repeller if $\forall i:\textnormal{re}(\lambda_i)>0$, a sink if $\forall i:\textnormal{re}(\lambda_i)<0$ or a saddle if there are $k$ eigenvalues with $\textnormal{re}(\lambda_i)>0$ and $n-k$ eigenvalues with $\textnormal{re}(\lambda_i)<0$, respectively. Fixed points with $\forall i:\textnormal{re}(\lambda_i)\neq0$ are called \emph{hyperbolic}. In this study, however, we will be primarily concerned with fixed points that have at least one zero eigenvalue, i.e., with \emph{non-hyperbolic} fixed points. Non-hyperbolic fixed points commonly arise at bifurcations and are quite different from their hyperbolic counterparts: their stability cannot easily be determined via linear stability analysis and they are typically quite fragile (structurally unstable) as the topology of the phase portrait is altered by arbitrarily small perturbations to the vector field. While non-hyperbolic fixed points come in different varieties, we will focus next on \emph{saddle-node} fixed points (not to be confused with hyperbolic saddles) and their ghosts as organizing centers of a system's non-asymptotic dynamics. 

To gain intuition about saddle-nodes, we return to \autoref{eq:1}. As we increase $\mu$, $x_s^*$ and $x_u^*$ move closer and closer together until they merge into a single fixed-point at $\mu=0$ (\autoref{fig:fig1}c), which is called a one-dimensional saddle-node - the namesake of the SN-bifurcation. The eigenvalue of the Jacobian of a one-dimensional first-order ODE is equivalent to the first derivative evaluated at the fixed point, i.e., in this case $\lambda=f'(0,0)=0$, confirming that our saddle-node is non-hyperbolic.
Since $f(x,0)>0$ for both $x<0$ and $x>0$, all points on the negative half-line will approach $x^*=0$ as $t\to\infty$, whereas all points on the positive half-line will move away from it, such saddle-nodes are sometimes called \emph{semi-stable}. In this sense, the saddle-node inherited the negative half-line as a basin of attraction from the stable fixed point, and a repelling positive half-line from the unstable fixed point. Moving on to higher dimensions, we can see that various types of saddle-nodes can be constructed by pairing one or more semi-stable directions with stable and unstable directions that correspond to a system's center, stable and unstable eigendirections, respectively (\autoref{fig:fig1}d). This picture makes it furthermore clear that in higher dimensions, saddle-nodes exist whose eigenspace is identical with their center subspace and that many saddle-nodes have unstable directions, repelling all trajectories that do not begin exactly on a stable or center manifold. To describe these saddle-nodes formally, we begin by generalizing the definition of a saddle-node equilibrium of type $k$ \cite{Amaral2010,Amaral2012} to account for semi-simple eigenvalues zero:

\begin{figure}
	\centering
	\includegraphics[width=\textwidth]{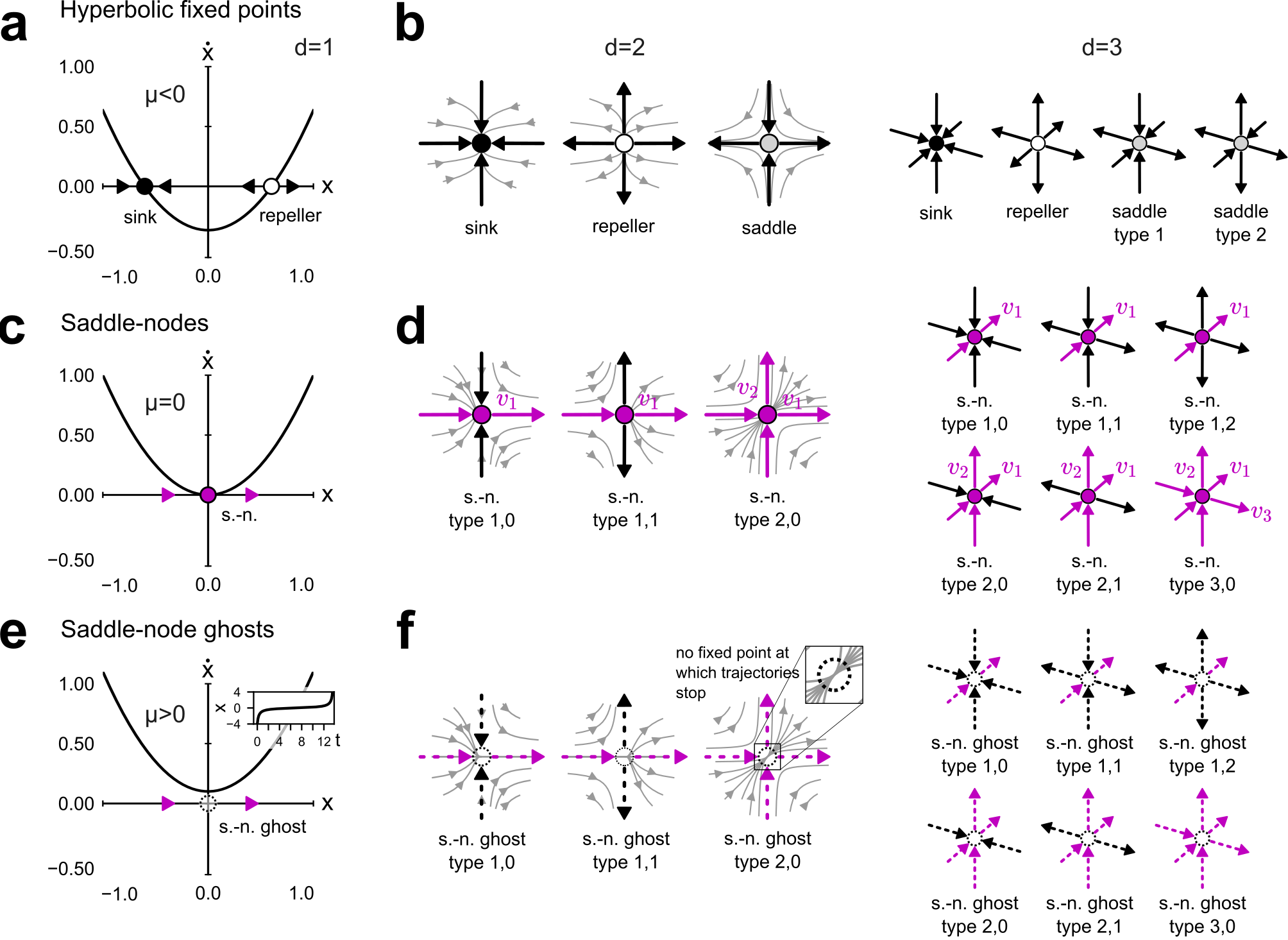}
	\caption{Organization of phase space flow by fixed points and saddle-node ghosts. \textbf{a, b}, hyperbolic fixed points in dimensions 1-3. \textbf{a}, sink and repeller for flows on the line (\autoref{eq:1}) for which stability of fixed points can be visually assessed by considering $\dot x$ versus $x$. \textbf{c,d}, non-hyperbolic saddle-nodes in dimensions 1-3. \textbf{c}, saddle-node at the bifurcation point for flows on the line. Since $\dot x >0$ for all $x\neq 0$, the saddle-node at $x=0$ is semi-stable: the negative half-line is attracted to the fixed point, the positive half-line moves away from it. \textbf{d}, saddle-nodes of type $j,k$ in dimensions 2 and 3. Black arrows indicate eigenvectors of the stable/unstable subspaces, magenta arrows indicate eigenvectors of the center subspace. \textbf{e,f}, ghosts for various types of saddle-nodes in dimensions 1-3. \textbf{e}, ghost for flows on the line. While $\dot x > 0$ for all $x$, the general direction of the flow is identical as for the saddle-node and low values for $\dot x$ at $x\approx0$ (bottleneck) lead to a long transient (inset) which together constitutes finite-time attraction. \textbf{e}, ghosts of type $j,k$ saddle-nodes in dimensions 2 and 3. Dotted arrows indicate flow directions at the ghost. Note how despite the absence of fixed points the phase portrait maintains high similarity to the parental saddle-node.}
	\label{fig:fig1}
\end{figure}
\newpage
\begin{definition}[\textit{Saddle-node equilibrium and bifurcation point of type $j,k$}]\label{def:SN}
Let $\dot{x} = f(x,\rho_{\textnormal{crit}})$, $x\in \mathbb{R}^n$, $f\in \mathcal{C}^\omega$, be an autonomous dynamical system with parameters $\rho_{\textnormal{crit}} = (\rho_1,\dots,\rho_m)^T\in \mathbb{R}^m$. We say $x^*$ fulfilling $f(x^*)=0$ is a saddle-node equilibrium of type $j,k$ and $(x^*,\rho_{\textnormal{crit}})$ a saddle-node bifurcation point of type $j,k$ if the following holds true:
\begin{enumerate}[label={(\textbf{SN\arabic*})}, ref={(\textbf{G\arabic*})}]
    \item $D_xf(x^*,\rho_{\textnormal{crit}})$ has a semi-simple eigenvalue 0 of algebraic and geometric multiplicity $j$, with corresponding linearly independent eigenvectors  $v_1,\dots,v_j$ and $w_1,\dots,w_j$ to the right and left, respectively, that vary continuously with $(\rho_1,\dots,\rho_m)^T$.
    \item For each $i$ with $1\leq i \leq j$ there exists $l$ with $1\leq l \leq m$ with $w_i((\partial f/\partial\rho_l)(x^*,\rho_{\textnormal{crit}})) \neq 0$.
    \item for every non-zero $u\in \ker(D_xf(x^*,\rho_{\textnormal{crit}}))$, there exists $i$ with $1\leq i \leq j$ such that \mbox{$w_i(D_x^2f(x^*,\rho_{\textnormal{crit}})(u,u))\neq 0$}.
    \item $D_xf(x^*,\rho_{\textnormal{crit}})$ has $k$ eigenvalues $\lambda^+_1,\dots,\lambda^+_k$ with positive real parts, and $n-j-k$ eigenvalues $\lambda^-_1,\dots,\lambda^-_{n-j-k}$ with negative real parts.
\end{enumerate}
\end{definition}

For $j=1$, Definition \ref{def:SN} reduces to the definitions given in \cite{Amaral2010,Amaral2012}, but for $j>1$ captures the other saddle-nodes shown in \autoref{fig:fig1}d: (SN1) ensures a center subspace space spanned by $j$ independent eigendirections, (SN2) means that for each center direction, there is a parameter that upon perturbation leads to an unfolding of the saddle-node, i.e. saddle-nodes of type $j,k$ are structurally unstable. Note that since this can include different parameters for different center directions, the saddle-node of type $j,k$ has a codimension of $\leq j$. (SN3) guarantees quadratic non-degeneracy for each non-zero center direction. 
(SN4) simply describes the dimension of the saddle-nodes' stable and unstable subspaces.

Having defined saddle-nodes of type $j,k$, we next will consider how their \emph{ghost} influences the non-asymptotic dynamics of a system after a SN-bifurcation occurs. To do so, we return again to \autoref{eq:1}, this time for a small $\mu>0$ (\autoref{fig:fig1}e). Since $\dot x=\mu+x^2>0$ for $\mu>0$, \autoref{eq:1} has no fixed points anymore, i.e., no invariant objects can govern the system's dynamics. However, while $\dot{x}>0$, $\mu$ is small and so for small $|x|$, $\dot x$ becomes small, too, and any trajectory starting on the negative half-line spends a long time in the vicinity of 0 before eventually escaping towards infinity (cf. inset in \autoref{fig:fig1}e). Notice how the phase portrait of the ghost, except for the vanished fixed point, does not differ much from that of its saddle-node at $\mu$, in particular the direction of the flow remains the same. Indeed, this simple example already shows many features that are associated with ghosts: slow dynamics, non-invariance and a finite-time sense of attraction. However, as saddle-nodes can become more complex in higher dimensions, so do their ghosts. Consider the following system:
\begin{align}
\dot{x}_i = \label{eq:2}
\begin{cases}
\mu_i + x_i^2, & 1 \le i \le j, \\[6pt]
x_i, & j+1 \le i \le j+k, \\[6pt]
-x_i, & j+k+1 \le i \le n
\end{cases}
\end{align}

where integers $n,j,k$ describe the system's dimension, the number of dimensions that follow normal form dynamics, and the number of attracting and repelling directions, respectively, and $\mu_i, 1\leq i \leq j$ are the parameters. By design, \autoref{eq:2} describes saddle-nodes of type $j,k$ in $\mathbb{R}^n$ for any $n$ at $\mu_1=\dots=\mu_j=0$. 
For $\mu_1=\dots=\mu_j>0$, we see ghosts for each saddle-node of type $j,k$ emerging, each of whose phase portrait closely resembles the corresponding saddle-node (\autoref{fig:fig1}f). Ghosts of saddle-nodes of type $1,0$ are the `standard' ghosts commonly found in the literature: funneling trajectories from a larger area of the phase space, slowing them down and ejecting them into a single direction. In contrast, ghosts of type $j,k$ saddle-nodes with $k>0$ repell most trajectories, defying the notion of attraction. Ghosts of type $j,0$ saddle-nodes, $j>1$, inherit the phase-portrait organized by the crossing center eigendirections of the saddle-nodes, collecting and bundling trajectories from one region of the phase space, slowing them down locally, but then ejecting them into more than one directions (cf. s.-n. ghost type $2,0$ in \autoref{fig:fig1}f), in that sense making ghosts of $j,0$ saddle-nodes, $j>1$, genuinely higher-dimensional. While this phenomenological description shows that our standard notions do not apply to all ghosts, we still have not defined formally what a ghost is. Perhaps owing due to their non-invariant and transient nature, there is, in fact, no single and generally accepted formal definition of ghosts in dynamical systems yet. The recent definition by Zheng \textit{et al}. \cite{Zheng2025}, for example, considers ghosts as the non-zero minima of user-defined cost functions to capture the slow dynamics and enable their continuation across parameters, whereas our previous definition \cite{Koch_2024} accounts for their non-invariance and the existence of a basin of attraction to define composite structures made up from multiple ghosts in phase space. However, neither definition captures all of the above described features and is sufficient for the purpose of this study. In particular, local minima of some of the cost functions described in \cite{Zheng2025}, as we will see later, also occur in systems with long transients that are not due to ghosts and neither \cite{Zheng2025} nor \cite{Koch_2024} account for the dimensionality of a ghost. We thus propose the following new definition and terminology:

\begin{definition}[\textit{Ghost of type $j,k$ saddle-node}]\label{def:ghost}
Let $\dot{x} = f(x,\rho)$, $x\in \mathbb{R}^n$, $f\in \mathcal{C}^\omega$, be an autonomous dynamical system with parameters $\rho = (\rho_1,\dots,\rho_m)^T\in \mathbb{R}^m$ and $\mathcal{A}\in\mathbb{R}^n$ a closed bounded set. We say the system has a ghost of a type $j,k$ saddle-node at $x_g\in \mathcal{A}$ if the following holds:
\begin{enumerate}[label={(\textbf{G\arabic*})}, ref={(\textbf{G\arabic*})}]
    \item $x_g$ is the non-zero minimum of $Q:\mathbb{R}^n\to \mathbb{R}$, $Q(x)=\frac{1}{2}||f(x,\rho)||^2$ restricted to $\mathcal{A}$ (\textit{slowness})
    \item $\mathcal{A}$ contains no semi-trajectories in forward time (\textit{non-invariance})
    \item There is a saddle-node $x_{\textnormal{sn}}$
    of type $j,k$ in parameter space at $\rho_{\textnormal{crit}}$ such that $\lim_{\rho\to\rho_{\textnormal{crit}}}x_g=x_{\textnormal{sn}}$ and \mbox{$||\rho-\rho_{\textnormal{crit}}||<||\rho-\hat\rho_{\textnormal{crit}}||$} for any other saddle-node $\hat{x}_{\textnormal{sn}}$ of type $\hat{j},\hat{k}$ at $\hat{\rho}_{\textnormal{crit}}\neq \rho_{\textnormal{crit}}$.  (\textit{parental saddle-node})
\end{enumerate}
We call $j$ the \emph{dimension} of the ghost and say ghost $x_g$ is \emph{attracting} if its limiting saddle-node is of type $j,0$ and \emph{non-attracting} otherwise.
\end{definition}

While we have not explicitly stated how the set $\mathcal{A}$ is to be defined, we find in practice a small neighborhood of a potential ghost to be sufficient. (G1) formally captures the slow dynamics that result from the flat quasi-potential landscape due to the coalescence of equilibria at the saddle-node \cite{Stanoev2018,Koch_2024,Koch2024_rev}. The auxiliary function $Q$ is equivalent to the cost-function $J$ used in \cite{Zheng2025}. (G2) simply states the non-invariance of ghosts, i.e., any trajectory visiting the neighborhood of a ghost must eventually leave this neighborhood again. While (G1-G2) are necessary conditions for ghosts, they are likely met by most long transients. (G3) is the core of the definition and states that each ghost has a closest saddle-node of type $j,k$ in parameter space from which it inherits the organization of the phase portrait and which defines the ghost's dimension. 
It also excludes that a ghost can be a ghost of multiple saddle-nodes and will be important in the context of different ways of unfolding a type $j,k$ saddle-node. Although the notion of attraction is only implicit (as it is not necessary for the purpose of this study), definition \ref{def:ghost} accounts for all features observed for the ghosts highlighted above.

\subsection{The GhostID algorithm}\label{subsec:algorithm}

Having formally defined what we mean by a ghost brings us a step closer to an algorithm that can identify and characterize ghosts. While (G1) and (G2) are straightforward to test, searching the parameter space for parental saddle-nodes is not a feasible strategy. We thus need move on from (G3) to an algorithmically identifiable criterion. For this, we will rely on our previous observation \cite{Koch_2024} that all ghosts we examined so far exhibit a linear spatial distribution of instantaneous eigenvalues ranging from negative to positive along the former center direction of the saddle-node. Indeed, we find the same to be true for higher dimensional ghosts (e.g., for system \ref{eq:2} for $n=j=2,k=0$, cf. Supplementary \autoref{sfig:2dghost}). This leads us to formulate the following conjecture:

\begin{conjecture}[Eigenvalue spectrum of type $j,k$ saddle-node ghosts]\label{eigvalConjecture}
Let $\dot{x} = f(x,\rho)$, $x\in \mathbb{R}^n$, $f\in \mathcal{C}^\omega$, be an autonomous dynamical system with parameters $\rho = (\rho_1,\dots,\rho_m)^T\in \mathbb{R}^m$ for which $x_g\in \mathcal{A}$ is a ghost of a type $j,k$ saddle-node $x_{sn}$ at $\rho_{\textnormal{crit}}$. For each $i$ with $1 \leq i \leq j$, there exists a unit vector $u_i\in \mathbb{R}^n$ such that in the neighborhood of $x_g$, the eigenvalues $\lambda_i$ of $D_xf(x,\rho)$ along the line $\{x\in \mathcal{A}\ |\ x_g+tu_i,\ t\in\mathbb{R}\}$ vary continuously and change sign from negative to positive with $\lambda_i(t)\approx ct$, $c\in\mathbb{R}$, to leading order.
\end{conjecture}

Conjecture \ref{eigvalConjecture} states that the instantaneous eigenvalue distributions along the former center directions of the saddle-node still persist at the ghost, which is what we will use as algorithmic criterion to evaluate if a trajectory passes nearby a ghost. Although we were not able to provide a proof here, we strongly suspect conjecture \autoref{eigvalConjecture} follows from definition \ref{def:SN} and \ref{def:ghost} and find empirically that such a distribution exists in all systems exhibiting a ghost state resulting from a SN-bifurcation. While such a spatial distribution of instantaneous eigenvalues might potentially result from other mechanisms, too, and hence does not strictly guarantee the presence of a ghost, we empirically find it to be a very strong indicator based on numerous tested systems with different mechanisms underlying long transients. 

\begin{wrapfigure}{r}{0.45\textwidth}
  \begin{center}
    \includegraphics[width=0.44\textwidth]{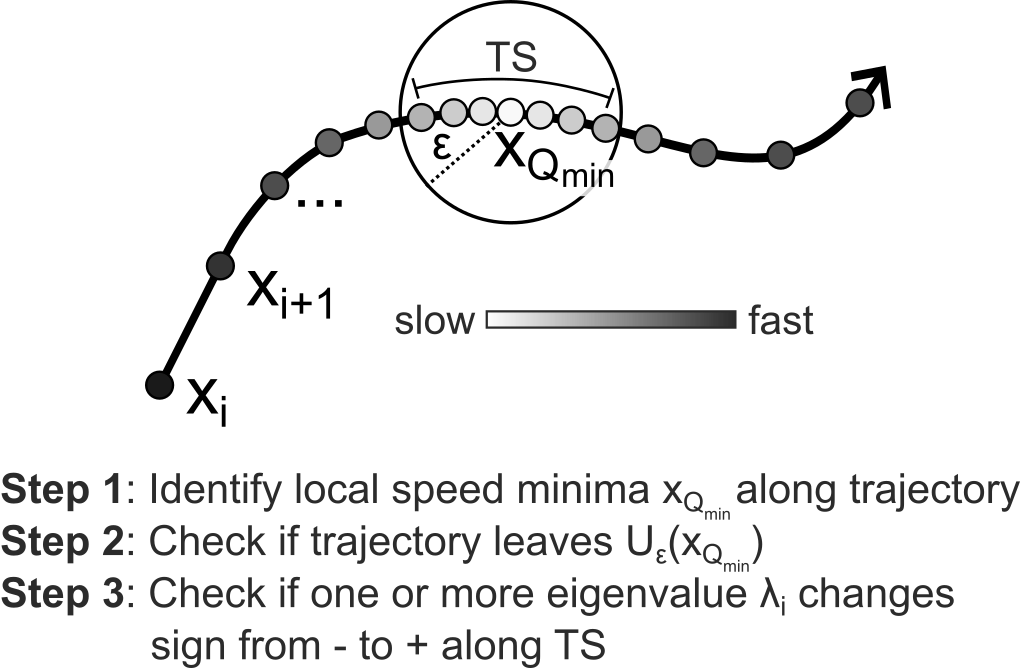}
  \end{center}
  \caption{Illustration of the main steps of how ghosts are identified via the GhostID algorithm.} \vspace{0.5cm}
  \label{fig:fig2}
\end{wrapfigure}

Following the outlined framework, we propose a simple algorithm, which we term GhostID, that identifies ghosts along a trajectory by evaluating criteria (G1), (G2) and the changes in instantaneous eigenvalues from conjecture \autoref{eigvalConjecture}. The basic steps of this algorithm, illustrated in \autoref{fig:fig2}, are as follows: The first step of the algorithm is to find the locally slowest points along a trajectory, corresponding to evaluating a trajectory proxy for criterion (G1). In the second step, the algorithm evaluates if the trajectory leaves the $\varepsilon$-neighborhood of a locally slowest point, testing for non-invariance (G2). The third and final step consists of testing whether one or more of the instantaneous eigenvalues change from negative to positive along the trajectory segment within the $\varepsilon$-neighborhood from the second step. Thus, starting from a slow point, if both non-invariance and the eigenvalue criterion are satisfied, we consider the long transient to be caused by a ghost, with the number of eigenvalues changing sign being considered by us as the ghost's dimension. A full technical description of the algorithm, its hyper-parameters and some rationales for choosing their values them can be found in Supplementary section \ref{subsec:ghostIDparas} and Supplementary \autoref{sfig:algorithm}a-b.

Besides the GhostID algorithm, we developed three further algorithms that are build on GhostID and represent core functionalities of the PyGhostID package (Supplementary \autoref{sfig:algorithm}c): {\tt ghostID$\_$phaseSpaceSample}, an integrated simulation and evaluation version of the algorithm that uses Latin hypercube sampling and parallel processing to sample many trajectories within a phase space region for ghosts, {\tt ghost$\_$connections}, which allows to reconstruct ghost channels, ghost cycles \cite{Koch_2024} and more complex composite structures of ghosts from one or more sequences of ghosts identified by GhostID, and {\tt track$\_$ghost$\_$branch}, which, inspired by \cite{Zheng2025}, allows to track an identified ghost state as system parameters are varied. While a more detailed description of these functions can be found in Supplementary section \ref{subsec:PyGhostIDpackage} and the PyGhostID documentation, we provide examples of their application in \autoref{subsec:applications}.

\subsection{Numerical validation of the GhostID algorithm}\label{subsec:validation}

Before applying GhostID to obtain new insights, we first perform a series of numerical validations to ensure the algorithm correctly identifies ghosts in models from various disciplines and does not identify long transients as ghost that result from other mechanisms such as slow manifolds from slow-fast systems or saddle crawl-bys.

To begin our validation, let us consider the following model of ecological interactions between corals and macroalgae by Bieg \textit{et al}. \cite{Bieg_2024}:
\begin{align}
        \dot{C} &= C \left( r(1 - C N_t) B - m - a M \left( \frac{N_t}{N_0 + N_t} \right) \right) \label{eq:bieg}\\
    \dot{M} &= M \left( a C \left( \frac{N_t}{N_0 + N_t} \right) - \frac{g}{M + B} + \gamma B \right) \nonumber\\
    B &= 1 - C - M, \nonumber
\end{align}
where $C$, $M$ and $B$ represent corals, macroalgae and benthic r-strategists (groups of algae that colonize disturbed reefs on a faster timescale than macroalgae), and parameters $a,b, r$ and $\gamma$ represent (over-)growth rates, $g$ the grazing rate by herbivores, and $m$ is the mortality rate of corals.
This model has been proposed to be organized close to a saddle-node bifurcation where it exhibits a ghost attractor to explain how coral reef recovery may be hampered by repeated climate or human driven perturbations that prevent trajectories escaping a long transient of the algae-dominated state \cite{Bieg_2024}. Applying GhostID to a trajectory from this regime correctly identifies a ghost between the system's nullclines (\autoref{fig:fig4}a, left). Considering the $pQ$ time series reveals that the algorithm interestingly identified two $Q$-minima along the trajectory (\autoref{fig:fig4}a, middle). Once the trajectory reaches the system's stable fixed point, $pQ(t)$ appears to fluctuate irregularly, leading to additional $Q$-minima. However, these irregular movements are mere numerical noise which can be reduced by lowering the numerical tolerances of the integration procedure (Supplementary \autoref{sfig:numNoise}) and have been ignored by GhostID by requiring identified peaks to have a minimum width of $50$dt. The eigenvalues along the trajectory segment around the first identified $Q$-minimum are consistent with the system visiting a slow region around the saddle on the y-axis and are thus not identified as a ghost, whereas on the segment of the second minimum eigenvalue $\lambda_2$ changes sign from negative to positive, leading to the algorithm identifying a ghost (\autoref{fig:fig4}a, right). Additional examples of GhostID correctly identifying known ghosts in models from cellular biochemistry, gene regulatory networks, and condensed matter physics can be found in Supplementary section \ref{subsection:numVal_ghosts} and Supplementary \autoref{sfig:numval_ghosts}. Lastly, while we are not aware of higher-dimensional or non-attracting ghosts being described in the literature, GhostID correctly identifies these ghosts and their corresponding dimensions from our example system \ref{eq:2} (Supplementary \autoref{sfig:numval_ghosts_new}).

Having confirmed that GhostID correctly identifies previously characterized ghost states in models of various real-world phenomena, we next test if GhostID correctly ignores long transients resulting from other phenomena. First, we test GhostID on trajectories with long transients resulting from passing near saddles. The first system we consider for this task is a simple ecological model from Hastings \textit{et al.} \cite{Hastings_2018}:

\begin{equation}
    \dot{N} = \alpha N \left(1 - \frac{N}{K}\right) - \gamma \frac{N P}{N + h} \label{eq:Hastings}, \quad
    \dot{P} = \epsilon \left( v \gamma \frac{N P}{N + h} - m P \right), 
\end{equation}

where $N$ denotes the prey species, $P$ the predators, and $\alpha$ is the growth rate of the prey population, $K$ its carrying capacity, $\gamma$ the predation rate, $h$ the half-saturation constant for predation, $v$ is a scaling factor of how predation influences predator reproduction, $m$ is the mortality rate of predators and $\epsilon$ the timescale separation between predator and prey dynamics. For the parameter regime depicted in \autoref{fig:fig4}b (left), the system exhibits oscillatory dynamics in which the orbit is forced to pass close to two different saddles (saddle crawl-by), causing the trajectory to slow down in their vicinity. However, the eigenvalues along the trajectory segments of the corresponding $Q$-minima do not cross $0$, hence the saddle crawl-bys are not identified as ghosts by our algorithm (\autoref{fig:fig4}b, middle and right). Another example involving a heteroclinic cycle can be found in Supplementary section \ref{subsection:numVal_saddles} and Supplementary \autoref{sfig:numval_HC}. Since saddles are hyperbolic fixed points by definition, it is easy to prove that within a sufficiently small neighborhood of the fixed point, the eigenvalues along a trajectory in this neighborhood can never change sign (Supplementary section \ref{subsec:theorem_vicinityHypFP}) and thus will not be identified as ghosts by our algorithm.

The next important mechanism to test our algorithm against are long transients in slow-fast systems generated by an explicit separation of timescales via a small parameter $0<\epsilon\ll1$. The first system we consider is the FitzHugh-Nagumo model, a simplified version of the original Hodgkin-Huxley neuronal model. Here, we use the dimensionless version of the system by Cebrián-Lacasa \textit{et al}. \cite{CebrinLacasa2024}, given by:

\begin{equation}
    \dot{u} = u - u^3 - v , \quad \dot{v} = \epsilon (u - b v + a), \label{eq:FHN}
\end{equation}
where $u$ represents membrane voltage of the neuron, $v$ slow recovery from depolarization through the opening of K$^+$- and inactivation of Na$^+$-channels, $a,b$ are dimensionless parameters, and $\epsilon$ the timescale separation. Using GhostID on a trajectory from \autoref{eq:FHN} in a relaxation oscillation regime with pronounced timescale separation shows again that our algorithm does not identify the slow dynamics from slow-fast systems as ghosts (\autoref{fig:fig4}c). Further examples of GhostID correctly non-identifying long transients stemming from slow-fast dynamics rather than ghosts can be found in Supplementary section \ref{subsection:numVal_slowFast} and Supplementary \autoref{sfig:numval_SF}.

Having tested our algorithm on various models from different disciplines exhibiting different types of long transients, we conclude that GhostID correctly and specifically identifies ghost dynamics.

\begin{figure}
	\centering
	\includegraphics[width=\textwidth]{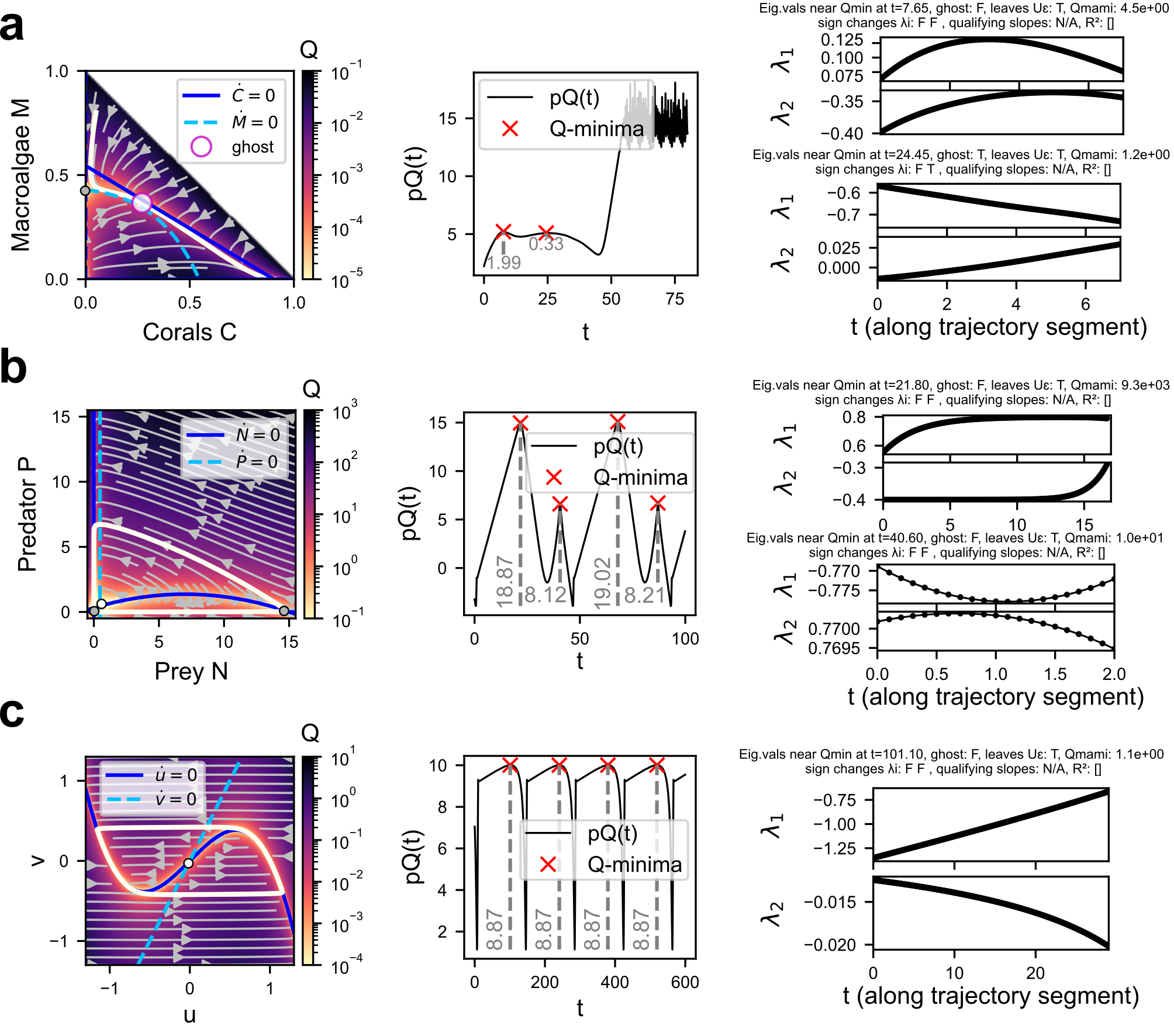}
	\caption{Numerical validation of GhostID. \textbf{a}, ghost identified by GhostID in ecological model from \cite{Bieg_2024}. Parameter values: $a = 2.0$, $\gamma = 0.7$, $m = 0.15$, $g = 0.4$, $N_t = 0.53$, $N_0 = 0.5$, $r = 1.8$, $c = 0.25$.
    \textbf{b}, long transients due to saddle crawl-bys in model from \cite{Hastings_2018} are not identified by GhostID. Parameter values: $\gamma=2.5$, $h=1$, $v=0.5$, $m=0.4$, $\alpha=0.8$, $K=15$, $\epsilon=1$. \textbf{c}, long transients due to slow-fast dynamics in FitzHugh-Nagumo model are not identified by GhostID. Parameter values: $a=0$, $b=0.5$, $\epsilon=0.01$. \textbf{a-c}, Left: phase space. Grey dots indicate saddles, white dots indicate unstable focus. Middle: $pQ(t)$ plots along with $Q$-minima and peak prominences. Right: eigenvalues along trajectory segments and evaluation criteria.}
	\label{fig:fig4}
\end{figure}

\subsection{Applications}\label{subsec:applications}
Having defined new types of ghosts and algorithms to identify and characterize them, we seek to find examples of these phase-space objects. For this purpose, we next turn to paradigmatic models from several disciplines in which long transients and ghost dynamics have received still relatively limited attention.

\subsubsection{Two-dimensional ghosts and ghost-bifurcations in coupled theta neurons}

To begin our investigation into complex and potentially novel ghost dynamics, we decided to focus on simple models of coupled neurons whose dynamics in isolation exhibit a SNIC bifurcation as key ingredient for the emergence of ghosts. While long transients and metastability in computational neuroscience have received some attention over time (cf. \cite{Hancock2024,Rossi_2025} for reviews), the role of ghosts in neuronal dynamics remains relatively underexplored. A good example for the importance of ghosts was recently provided by Choi \textit{et al.} which proposed fly olfactory neurons to be organized close to a SNIC bifurcation, showing that the resulting ghost dynamics provide optimal information processing for sensory stimuli \cite{Choi_2024}. We thus consider the theta neuron \cite{Ermentrout1986} as a simple phase oscillator model that captures the essential dynamic features described in \cite{Choi_2024} to explore how pulse coupling of two identical theta neurons influences the ghost dynamics. The equations for this systems are given by \cite{Augustsson2024}: 
\begin{align}
    \dot{\theta}_1 &= 1 - \cos(\theta_1) + \left(1 + \cos(\theta_1)\right) \left(\eta_1 + \kappa \left(1 - \cos(\theta_2)\right)\right) \label{eq:augustson}\\
    \dot{\theta}_2 &= 1 - \cos(\theta_2) + \left(1 + \cos(\theta_2)\right) \left(\eta_2 + \kappa \left(1 - \cos(\theta_1)\right)\right) \nonumber,
\end{align}
where $\theta_i$ is the phase, $\eta_i$ the neuron's excitability threshold, and $\kappa$ the coupling strength determining the synaptic input current from neighboring neurons. In isolation, each theta neuron can either be excitable ($\eta<0$) or generate periodic spikes through a SNIC-bifurcation when $\eta>0$. For the coupled system we assume a coupling strength of $\kappa = 0.1$. We first consider the case of two identical neurons, i.e., $\eta_1=\eta_2=\eta$. For $\eta<0$ we find the same phase space topology with four fixed points as for \autoref{eq:2} (\autoref{fig:fig6}a, left). Increasing $\eta$ leads to a saddle-node bifurcation at $\eta=0$, where all four fixed points coalesce at the origin (\autoref{fig:fig6}b).  Examining this equilibrium shows that it is consistent with definition \ref{def:SN} and can be classified as a type $2,0$ saddle-node (Supplementary section \ref{subsec:theorem_jKSNinThetaNeurons}).
Using GhostID on a trajectory that passes through the slow region of the phase space for $\eta>0$ identifies a ghost of dimension 2 in the center of the phase space in which the flow is first funneled, then spread out again (\autoref{fig:fig6}a, right). Note that since the phase space of \autoref{eq:augustson} is the torus, no stable periodic orbit exists in this topology. Using PyGhostID's {\tt track$\_$ghost$\_$branch} function, we can track the identified ghost as we vary the excitability threshold $\eta$ and include the ghost in the bifurcation diagram (\autoref{fig:fig6}b). Moreover, we can extract information about each ghost along the branch and, as an example, have visually represented the trapping time at the ghosts by color, allowing us to easily identify in which parameter regime to expect significant long transients.

\begin{figure}
	\centering
	\includegraphics[width=\textwidth]{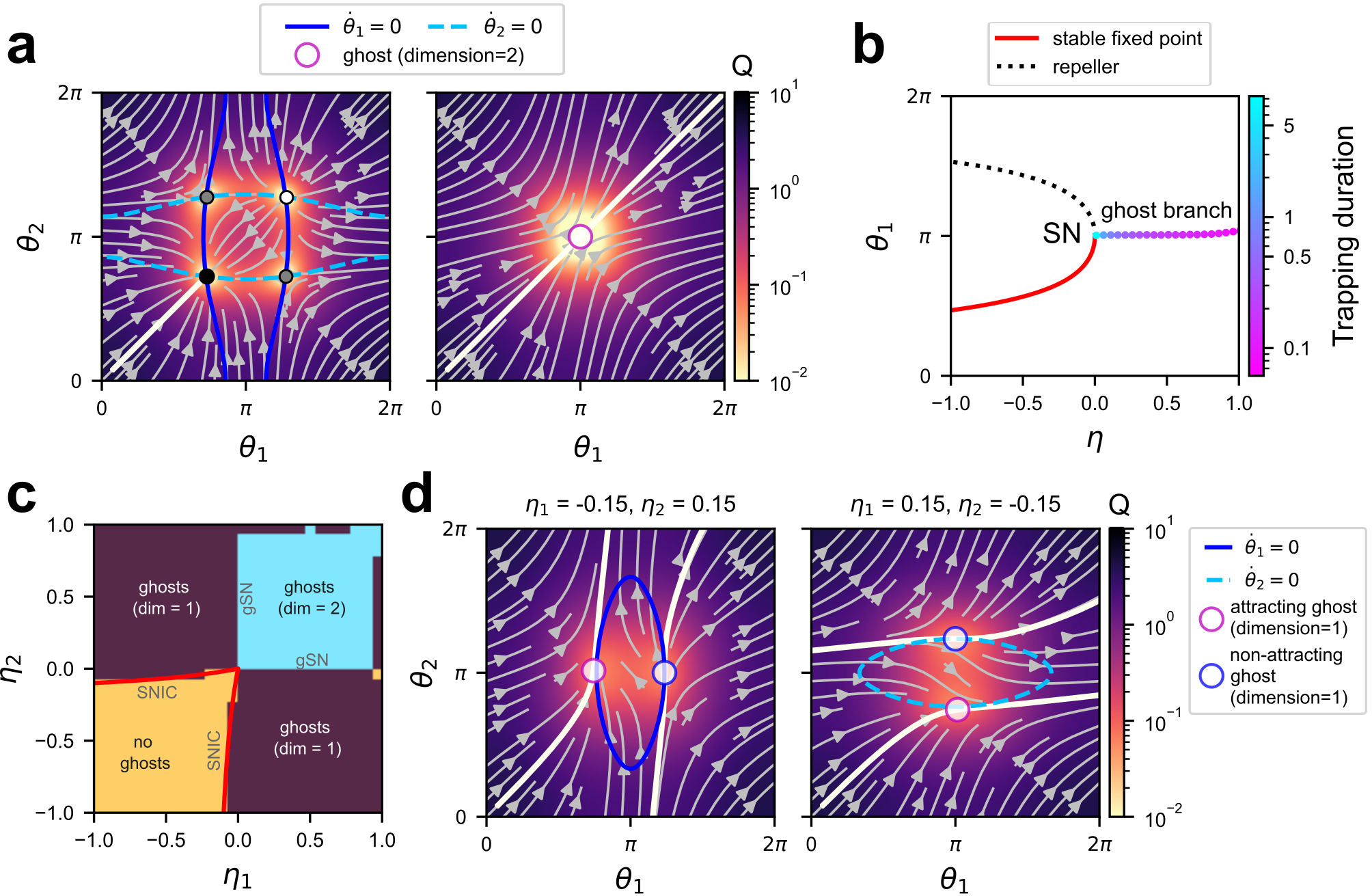}
	\caption{Higher-dimensional ghosts in coupled theta neuron models. \textbf{a}, left: phase space organization of identical theta neurons before bifurcation point. Black dot indicates sink, white dot indicates repeller, grey dots indicate saddles. Right: trajectory and two-dimensional ghost identified by GhostID after bifurcation. \textbf{a,d} White lines indicate trajectories, grey arrows flow. Note that for better visualization, $\theta_1$ and $\theta_2$ have been phase-shifted by $\pi$ in these plots, such that the origin is shown at $(\pi,\pi)$. \textbf{b}, tracking of the ghost identified in \textbf{a} in a bifurcation diagram using the {\tt track$\_$ghost$\_$branch} function. Each ghost along the branch is color-coded by its trapping duration. \textbf{c}, two-parameter bifurcation diagram for non-identical theta neruons with distinct excitability thresholds. Boundary between areas with one- and two-dimensional ghosts indicate \textit{ghost SN-bifurcation} (gSN). Red line indicates SNIC bifurcation. \textbf{d}, phase portraits for parameters yielding one-dimensional ghosts. 
    }
	\label{fig:fig6}
\end{figure}

Next, we turn to the case of non-identical neurons, i.e., allowing for $\eta_1\neq\eta_2$. While {\tt track$\_$ghost$\_$branch} currently does not feature tracking of ghosts versus two parameters, we can apply the {\tt ghostID$\_$phaseSpaceSample} method to characterize the ghost dynamics across the $\eta_1$-$\eta_2$-plane from a Latin hypercube sample within a specified region in phase space. Using this approach, we find two transitions from a parameter region in which only fixed points exist (i.e. without any ghosts) towards regions in which only one-dimensional ghosts exist after crossing a SNIC bifurcation (\autoref{fig:fig6}c). 
Both SNIC branches merge at $\eta_1=\eta_2=0$, corresponding to two saddle-nodes of type $1,0$ coalescing at $\theta_1=\theta_2=0$ to form a type $2,0$ saddle-node. After passing one of the SNIC bifurcations, the system exhibits a stable and an unstable limit cycle due to one of the neurons still being in an excitable, not spiking regime (\autoref{fig:fig6}d). Interestingly, moving from the 1-dimensional ghost region into the region defined by both $\eta_1,\eta_2>0$ reveals another transition into a region where two-dimensional ghosts exist. 
This transition seems to be best described as a \textit{ghost SN-bifurcation}, in which the attracting one-dimensional ghost merges with the non-attracting one-dimensional ghost shown in (\autoref{fig:fig6}d). Indeed, we observe the same bifurcation between the ghost of a type $1,0$ saddle-node and the ghost of a type $1,1$ saddle-node for system \ref{eq:2} (Supplementary \autoref{sfig:gSN_sys2}). Together these results highlight the capability of GhostID to detect ghosts, track them across parameter space and identify transitions in their properties.

\begin{figure}
	\centering
	\includegraphics[width=\textwidth]{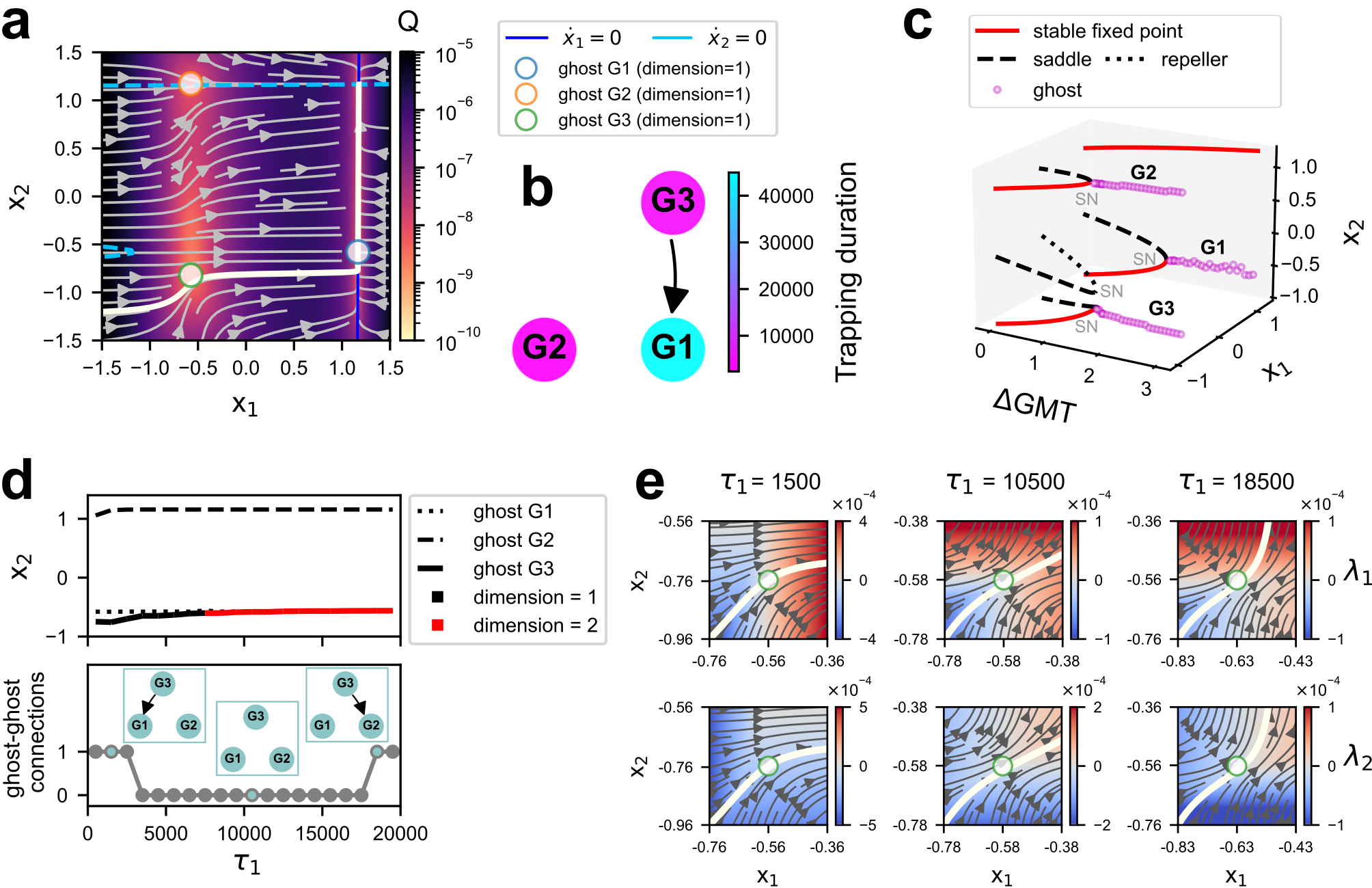}
	\caption{Ghosts and ghost structures in coupled climate tipping elements. \textbf{a}, phase portrait and ghosts identified from a phase space sample. \textbf{b}, ghost connections identified by PyGhostID's {\tt ghost$\_$connections} function showing a ghost channel between G3 and G1. \textbf{c}, tracking of ghosts from \textbf{a} versus $\Delta GMT$. \textbf{d}, influence of timescale parameter $\tau_1$ on ghost dynamics, showing a transition in G3's dimension (top) and a restructuring of the ghost channel from G3$\to$G1 to G3$\to$G2 (bottom). \textbf{e}, local redirection of the flow and changes in the spatial distribution of instantaneous eigenvalues in the vicinity of G3 for different values of $\tau_1$.}
	\label{fig:fig7}
\end{figure}

\subsubsection{Timescale induced transitions in ghosts of coupled climate tipping elements}

As another example, we consider coupled climate systems. While tipping of climate systems due to passing through saddle-node bifurcations is a key topic in climate research, the number of studies explicitly considering the role of ghosts is surprisingly sparse. Only recently, for example, has the potential role of ghost dynamics after tipping of the Atlantic meridian overturning circulation been taken into account \cite{Mehling2024,borner2025boundary}.
For studying ghost dynamics in tipping elements, we decided here to use a model of cascading climate tipping elements by Wunderling \textit{et al.} \cite{Wunderling2021} which is conceptually simple, yet easy to parametrize with the relevant real-world parameters and has informed a series of policy-relevant studies on climate change \cite{Wunderling2022,Moller2024}. The system is given by:

\begin{equation}
    \dot{x}_i = \frac{1}{\tau_i} \left( x_i - x_i^3 + \sqrt{\frac{4}{27}} \frac{\Delta GMT}{T_{\text{limit},i}} + d \sum_{ \forall j: j\neq i}  \frac{s_{ij}}{10}(x_j + 1) \right),\label{eq:Wunderling}
\end{equation}

where $x_i$ denotes the current value of the $i$th tipping element, $\tau_i$ is its tipping timescale, $\Delta GMT$ is the increase in the global mean surface temperature above pre-industrial levels, $T_{\text{limit},i}$ is the critical temperature above which the element tips (passes a SN-bifurcation), $d$ is an overall interaction strength parameter and $s_{i,j}$ represents the link strength between two interacting tipping elements. Since \autoref{eq:Wunderling} explicitly features multiple timescales, it is also an interesting system to explore how slow-fast dynamics and ghost dynamics interact. While a detailed investigation of the ghost structures in this system is the subject of another study [Nandan \textit{et al.}, in preparation], we focus here on two generic tipping elements which mutually reinforce each other. We assume the two tipping elements have similar but not identical critical temperatures and that their tipping timescales are different by a factor of 10. 
Generating several trajectories that sample the phase space using PyGhostID's {\tt ghostID$\_$phaseSpaceSample} function, we identify three ghosts of dimension 1 for $\Delta GMT$ being set just above the highest value of $T_{\text{limit},i}$ (\autoref{fig:fig7}a). Following a trajectory starting at $x_1(0)=-1.5, x_2(0)=-1.2$, we see the trajectory visiting both ghosts G3 and G1 before arriving at a stable fixed point (\autoref{fig:fig7}a, white line), suggesting the system exhibits a ghost channel \cite{Koch_2024} between G3 and G1. To confirm this, we use the {\tt ghost$\_$connections} function on the ensemble of trajectories to generate an adjacency matrix of ghost-ghost sequences. In addition to plotting the corresponding graph, we can encode information about each ghost captured by GhostID visually via the node color, e.g., to represent the trapping duration at each ghost (\autoref{fig:fig7}b). To confirm that {\tt ghostID$\_$phaseSpaceSample} identified all ghosts, we further followed each ghost while varying $\Delta GMT$ values using {\tt track$\_$ghost$\_$branch}, showing that each ghost branch terminates at one of three SN-bifurcation points in which a stable fixed point and a saddle merge (\autoref{fig:fig7}c). 

Next, we explored the influence of tipping timescales in this context. We thus characterize the system's ghost dynamics as we systematically increase $\tau_1$ until $\tau_1>\tau_2$. As {\tt track$\_$ghost$\_$branch} only tracks a single ghost, we used {\tt ghostID$\_$phaseSpaceSample} in combination with {\tt ghost$\_$connections} to identify the dimensions and the connections between the ghosts. While two of the ghosts exhibited no changes, we found a transition in the dimension of G3 from 1 to 2 at $\tau_1 \approx 1.5\ \tau_2$ (\autoref{fig:fig7}d, upper panel). Interestingly, considering the number of ghost connections as quantified by the adjacency matrix reveals another transition: as $\tau_1$ increases, the number of ghost connections drops to 0 at $\tau_1=3500$ before rising to 1 again at $\tau_1\geq18500$ (\autoref{fig:fig7}d, lower panel). Visualizing the corresponding graphs reveals that these transitions are associated with a re-wiring of the ghost connections. While for low values of $\tau_1$, there is a ghost channel from G3 to G1, a new ghost channel from to G3 to G2 is formed for high $\tau_1$ values. While these two types of transitions are clearly different from each other as indicated by different critical thresholds, they are also different from the ghost SN-bifurcation described above, in which an attracting and a non-attracting ghost merge. Zooming in into the phase space region around G3, we can see how increasing $\tau_1$ changes both the distribution of eigenvalues across space and the direction of the flow around G3 (\autoref{fig:fig7}e). At low $\tau_1$-values, we see only eigenvalue $\lambda_1$ around G3 being distributed from negative to positive along the flow, while $\lambda_2$ stays purely negative around G3. As $\tau_1$ increases, the flow compressed in G3 is more spread in the outbound region of G3 where now $\lambda_2$, too, becomes positive, thereby allowing trajectories to sample $\lambda_1$ from negative to positive.
Hence, the change in the dimension of G3 is as result of both the changed eigenvalue distribution across space and of the altered flow, which together determine what eigenvalues a trajectory will see as it passes close to or through G3. The transition between ghost connections, on the other hand, is a result only of the altered flow that determines if a another ghost can be reached starting from G3 or not. 

\subsubsection{Ghost networks in gene regulatory networks at criticality}

As a last example we consider ghost dynamics in GRNs in which each gene exhibits positive auto-regulation and can further be activated and repressed by other genes, modeled in an additive and multiplicative fashion by Hill-equations \cite{Alon2019} with a Hill-coefficient of 2, respectively:

\begin{equation}
    \dot{x_i} = (\frac{\alpha x_i^2}{x_i^2+K^2} + \sum_{\forall j: a_{ji=1}} \frac{\beta x_j^2}{x_j^2+K_a^2})\prod_{\forall j: a_{ji=-1}}\frac{K_i^2}{x_j^2+K_i^2} - x_i\label{eq:GRN},
\end{equation}

where $x_i$ refers to the concentration of gene product $i$, $\alpha$ is the auto-activation strength, $\beta$ the maximum transcription rate of activators, $a_{ij}$ are the elements of the adjacency matrix $A$ that describes the GRN's interactions, and $K,K_a,K_i$ are the half-activation constants of (auto-)activation and repression, respectively. For $K=0.5$ and in the absence of any activators and inhibitors, each gene exhibits a SN-bifurcation at $\alpha=1$ (\autoref{fig:fig8}a). We assume here $\alpha=0.998$, i.e., each gene is organized close to the SN-bifurcation and may thus exhibit a simple ghost, not unlike the situation described in \cite{Farjami_2021}. Directed Erd\H{o}s--R\'enyi networks for which each activatory link was replaced with an inhibitory link with a probability of $0.5$ were used to generate different GRN topologies. 

Using this simple network model, we observed larger ghost structures including ghost channels, ghost cycles and various combinations thereof, with many ghosts having a dimension larger than one. As an example we show a network with $N=12$ nodes (\autoref{fig:fig8}b). Using a large phase space sample of this network generated by {\tt ghostID$\_$phaseSpaceSample}, we identified the ghost structures embedded in the GRN using the {\tt ghost$\_$connections} function and colored each node in the corresponding graph according to the ghost dimensions. As shown in \autoref{fig:fig8}c, we find a strikingly complex ghost network embedded in the phase space of the GRN, featuring many interesting motifs, including cyclical structures, several ghosts converging into one, and ghosts ejecting their flow to two or more other ghosts (e.g. G6). While a systematic characterization of such ghost networks in phase space and their relation to the network structure of the dynamical system is beyond the scope of the current study, a few interesting observations can be drawn from this example already. First, the ghost network has a different topology and is notably larger than the GRN, featuring both more nodes and edges. The topology of the ghost network further suggests multi-stability between a three-ghost cycle, at which the majority of nodes eventually converge by following their links, and two ghost channels from G5 and G9 to G10 after which trajectories converge at an attractor. Timecourses of trajectories starting at three random initial conditions confirm this, showing convergence to the three ghost cycle with its periodic switching between distinct long transients of gene expression after different initial transients for two initial conditions and a two-ghost channel that ends at a stable steady state for the third initial condition (\autoref{fig:fig8}d). Another interesting observation is that the dimensions of the ghosts in the ghost network are relatively small compared to the dimension of the system ($n=12$), with the majority of ghosts having a dimension of 2-4. The dimension of the ghosts furthermore do not necessarily correspond to the number of connections a ghost forms. For example, while G9 has a ghost dimension of four, it only connects to two other ghosts. Sampling initial conditions around this ghost and projecting the high-dimensional trajectories into a three-dimensional principle component space shows that trajectories are being spread into four different directions, consistent with the ghost dimension measured by GhostID, before two pairs of the four trajectory bundles are each being funneled towards G4 and G10, respectively (\autoref{fig:fig8}e).

\begin{figure}
	\centering
	\includegraphics[width=\textwidth]{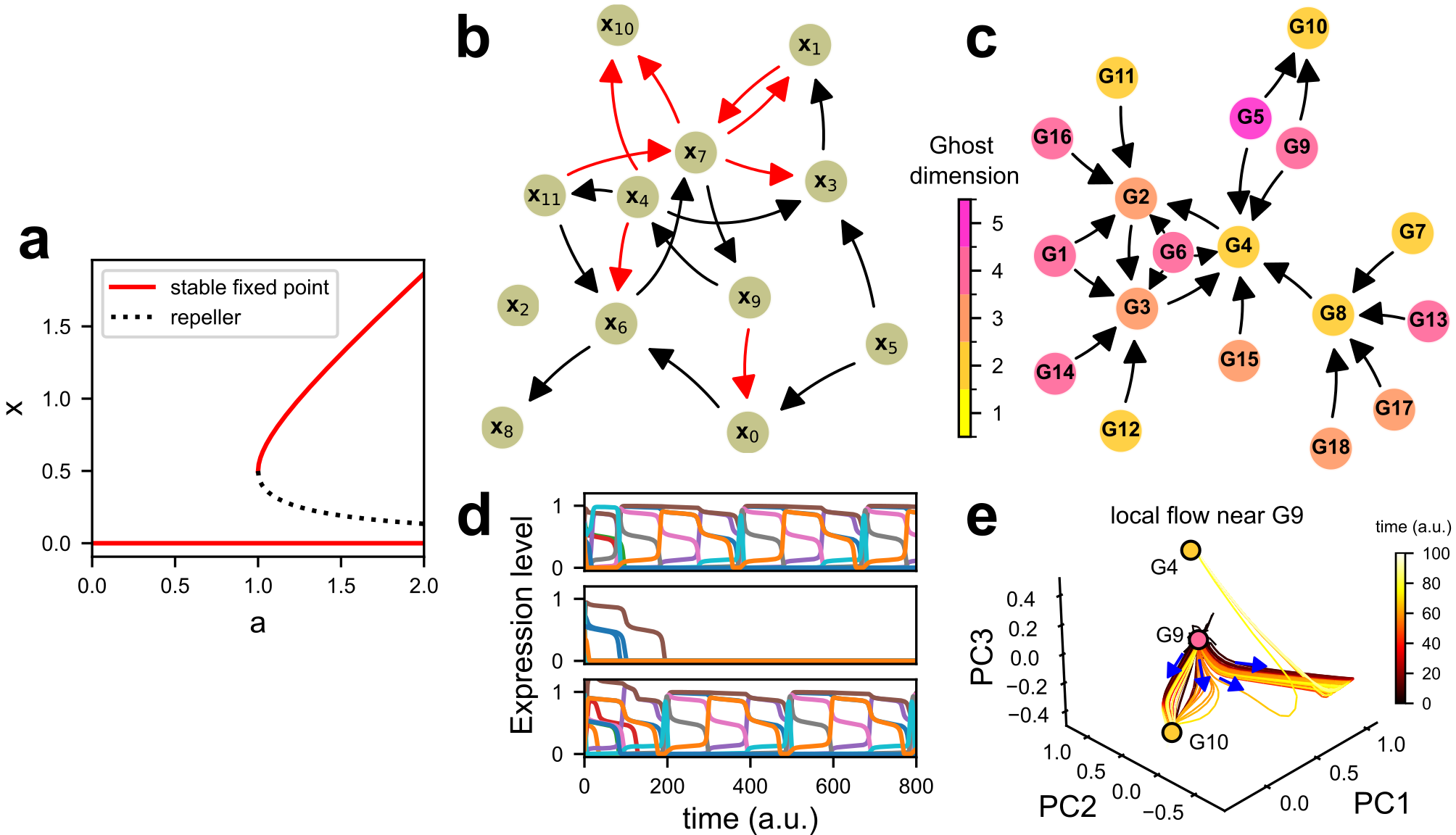}
	\caption{Complex ghost structures in gene regulatory networks. \textbf{a}, bifurcation diagram for single gene dynamics described by equation \ref{eq:GRN}. \textbf{b}, topology of a GRN generated by the Erd\H{o}s--R\'enyi random model. \textbf{c}, toplogy of a complex ghost network embedded in the dynamics of the GRN shown in \textbf{b}. Network was identified from a large phase space sample (300 trajectories) using the {\tt ghostID$\_$phaseSpaceSample} and {\tt ghost$\_$connections} functions from PyGhostID. \textbf{d}, timecourses of the GRN shown in \textbf{b} for three different random initial conditions. \textbf{e}, trajectories shown in principal component space for multiple initial conditions starting close to G9.
    }
	\label{fig:fig8}
\end{figure}

\section{Discussion}\label{sec:discussion}
We have presented here a generalized definition of ghosts in dynamical systems that accounts for the diverse topological structures of the saddle-node fixed points they are the remnant of and developed a series of algorithms to identify and characterize ghosts and their composite structures such as ghost channels and ghost cycles \cite{Koch_2024}. 
Naturally, there are several open questions and limitations that come with our approach. First, our current framework was developed to capture ghosts of fixed points. While further work is necessary to see whether it applies to ghosts of continuous attractors \cite{Fontich2022,Sagodi2025}, the current algorithms cannot capture ghosts of limit cycles as the slow dynamics only apply along the radial direction of their periodic orbit and it is unclear if any meaningful eigenvalue characteristics can be extracted from ghosts of limit cycles. The same argument applies to ghosts from chaotic attractors \cite{Mehling2024,borner2025boundary}. At least the first problem has been solved by the elegant study from Zheng \textit{et al.} \cite{Zheng2025}, who used variational methods to create a `best-fit' topological structure in the vector field. An alternative approach may be to replace the scalar function $Q$ by recurrences \cite{Datseris2022} for finding non-invariant structures in phase space that a trajectory spends a long time in. Secondly, the framework was designed for autonomous deterministic systems. While we suspect that slow parameter drifts will not impede detection of ghosts by our algorithm, faster or more complex parameter changes in non-autonomous systems will likely interfere with the algorithm. Similar considerations apply to stochastic systems. Noise has a significant influence on a trajectories' dwell-time at a ghost and can push it back and forth along the ghost region \cite{Sardanys2020,Koch_2024} which may result in non-monotonic eigenvalue samples that impair ghost identification. Lastly, we assumed throughout that $f$ is analytic. While this assumption holds for a vast number of models from applications, additional assumptions are likely required for our framework to apply to smooth systems (cf. Supplementary section \ref{subsec:smoothnessIssue}), although these assumptions are unlikely to affect the here introduced algorithms in most cases.

Despite these limitations, we showed that our definitions and algorithms offer a powerful new approach to study ghost dynamics in arbitrary dynamical systems. Using several relatively simple example systems, this approach allowed us to discover several interesting phenomena related to ghosts. While their comprehensive characterization is beyond the scope of the current study, we will briefly discuss their implications.

In our first example, two coupled and identical theta neurons \cite{Augustsson2024}, we showed the existence of a two-dimensional ghost following the merging of a sink, a repeller and two saddles (a phase portrait sometimes called \emph{basic tartan} \cite{Baesens2013}) at a type $2,0$ saddle-node of a degenerate SNIC-bifurcation. Since the theta neuron model is the normal form of a SNIC bifurcation, there is strong reason to believe that this result will apply to many coupled systems whose units individually undergo SNIC bifurcations. Characterizing the ghost dynamics in the case of non-identical neurons with different excitability thresholds, we found that one- and two-dimensional ghosts are detectable across a large area of the parameter plane, although pronounced long transients are only found close to bifurcation. This two-parameter analysis further revealed the existence of what we termed a \textit{ghost SN-bifurcation}, a collision of two one-dimensional ghosts - one of which is attracting, one non-attracting in the sense of definition \ref{def:ghost} - leading to the formation of a two-dimensional ghost. To our knowledge, this is the first description of a bifurcation that occurs exclusively between non-invariant objects and produces another non-invariant object. From a biological perspective, the example raises interesting new questions about the role of ghost dynamics in the context of information processing and computations in neuronal networks. For example, could higher dimensional ghosts provide a reliable mechanism for decoding multi-modal time-varying inputs as opposed to single time-varying inputs considered previously \cite{Choi_2024}? Assuming larger ghost networks as found in our GRN example can also emerge from neuronal dynamics, another intriguing perspective is to consider ghosts and their composite structure as a potential transient-based mechanism for structuring phase space trajectories along neuronal manifolds \cite{Perich2025}. As we find the properties of ghosts to be tunable by control parameters, it is tempting to speculate that neuronal networks might be able to adapt the properties of individual ghosts and larger ghost structures to their computational needs. Studying how ghost dynamics structure and support biological computations is thus an important area for further research.

In the second example, we examined coupled climate tipping elements using the model from Wunderling \textit{et al.} \cite{Wunderling2021}. We focused here on two the case of two generic, mutually reinforcing tipping elements with similar temperature thresholds but different tipping timescales in a scenario when the global mean temperature just passed these thresholds. Our analysis revealed the existence of three ghost states in this model, two of which form a ghost channel, detectable for the whole range of global warming considered here (i.e., up to $\Delta GMT = 3$°C). We further found a striking influence of the tipping timescales on the system's ghost dynamics, which included the change of one of the ghosts dimension by altering the local flow and eigenvalue distribution as well as a restructuring of the ghost connections. Since neither of these transition involved a collision between ghosts, the observed transitions are distinct from the ghost SN-bifurcation observed in the coupled theta neuron model. However, further work is required to better understand the nature of the timescale-dependent transitions. Since long transients due to ghosts can keep a system in a desirable area of phase space even when the equilibrium is lost, a more detailed investigation of the ghost dynamics in a higher dimensional version of this model for realistic parameters, real-world network topologies and policy scenarios is currently underway [Nandan \textit{et al.}, in preparation].

In the final example, we applied our algorithms to GRNs under the assumption that each gene is poised at criticality, i.e., close to a SN-bifurcation. Our analysis showed that embedded in the dynamics of such GRNs can be a complex ghost network in phase space with many higher-dimensional ghosts and a topology distinct from the GRN. These ghost networks can further be multi-stable in the sense of coexisting ghost structures, such that some initial conditions lead to visitation, e.g., of a ghost channel, whereas others end up on a ghost cycle. While the assumption of each gene being poised at criticality is perhaps too strong to be representative of real GRNs, there is a growing body of empirical evidence for ghost dynamics in GRNs \cite{RodrguezMaroto2025,Mercadal2026} and intracellular signaling networks \cite{Stanoev2018}. While the role of ghost dynamics in cell biology is barely explored, its unique properties have been used to explain processing of time-varying stimuli\cite{Stanoev_2020,Koch2024_rev} and put forward as robust mechanisms for cell population control\cite{Simons2025}, regulation of developmental transitions \cite{RodrguezMaroto2025}, and cell-type specification and differentiation \cite{Mercadal2026,Farjami_2021}. Moreover, our example demonstrates that complex ghost dynamics such as described in \cite{Farjami_2021} can easily emerge in random networks and do not require artificial network structures. Exploring complex and higher-dimensional ghost dynamics in GRNs and other types of intracellular regulatory networks thus has the potential to provide new perspectives on various phenomena in cell biology.

Based on these results, we believe the framework and algorithms presented here will open many interesting avenues for further research ranging from physical sciences to living systems.

\section{Methods}
All models in this study were implemented in Python and simulations were performed using the {\tt RK45} integration scheme as implemented in SciPy's {\tt solve$\_$ivp} function \cite{scipyref}. Numerical bifurcation analysis of the coupled theta neurons and tipping element models was performed using XPPAUT. A detailed description of the GhostID algorithm for identifying ghosts from trajectories is given in \autoref{subsec:algorithm} and Supplementary \autoref{sfig:algorithm}. Algorithms for phase space sampling, continuing identified ghosts across parameters and identifying composite ghost structures such as ghost channels and ghost cycles are described in Supplementary section \ref{subsec:PyGhostIDpackage}.

\subsection*{Author contributions}
\small{D.K.: conceptualization, formal analysis, numerical simulations, writing—original draft. A.P.N.: formal analysis, writing—review \& editing.}

\subsection*{Competing Interests}
\small{The authors declare no competing interests.}

\subsection*{Use of generative AI statement}
\small{Generative AI has been used for specific coding tasks and to check the validity of the introduced definitions and theorems. A detailed overview of its use can be found in Supplementary \autoref{subsec:genAIuse}.}

\subsection*{Acknowledgments}
\small{We would like to thank Gayathri Ramesan for helpful discussions during the early stages of this work and Aneta Koseska and Peter Ashwin for their feedback on this manuscript.}

\subsection*{Data availability}
\small{All codes for reproducing the data and figures from this study are available at \url{https://github.com/KochLabCode/PyGhostID}. PyGhostID is available at \url{https://pypi.org/project/PyGhostID/}.}


\printbibliography
\end{refsection}

\newpage 
\begin{refsection}[references_clean.bib]
\thispagestyle{plain}
\begin{center}
\textbf{\Large  - Supplementary Information -}\\
\vspace{0.3cm}
\textbf{\Large Characterization and identification of ghosts and their composite structures in dynamical systems}

\vspace{0.5cm}

Daniel Koch$^{1,2}$, \ Akhilesh P. Nandan$^{3}$

\vspace{0.3cm}

{\small $^{1}$Department of Pharmacology and Therapeutics, University of Manitoba\\
$^{2}$Institute of Cardiovascular Sciences, St. Boniface Hospital Research\\
$^{3}$European Molecular Biology Laboratory, Barcelona, Spain}
\end{center}
\setcounter{section}{0}
\setcounter{subsection}{0}
\renewcommand{\theHsection}{SM.\thesection}
\renewcommand{\theHsubsection}{SM.\thesubsection}
\renewcommand{\theHsubsubsection}{SM.\thesubsubsection}

\startcontents[SM]

\printcontents[SM]{}{1}{%
  \noindent\rule{\linewidth}{0.4pt}\par\smallskip
  \subsection*{Contents}%
}
\noindent\rule{\linewidth}{0.4pt}\par\bigskip

\section{Supplementary Sections}

\subsection{Identifying ghosts via GhostID as implemented in PyGhostID}\label{subsec:ghostIDparas}

The precise steps of the GhostID algorithm are as follows: first, the kinetic-like scalar $Q$ is calculated along a given trajectory to identify local minima as a criterion that the trajectory has entered a region with (relatively) slow dynamics (\autoref{sfig:algorithm}a, step 1-2). Next, around each local minimum a trajectory segment is defined as all points of the trajectory within an $\varepsilon$-neighborhood of the current $Q$-minimum that are part of the same continuous sequence of time steps as the minimum (\autoref{sfig:algorithm}a, step 3). To account for the non-invariance of potential ghosts, the algorithm checks if the trajectory leaves the $\varepsilon$-neighborhood of the current $Q$-minimum again before calculating the eigenvalues along the trajectory segment. Eigenvalues are then evaluated for monotone crossings from negative to positive along the segment (\autoref{sfig:algorithm}a, step 4). If there is at least one eigenvalue that changes sign, the algorithm is records the detection of a ghost with the total number of eigenvalues changing sign being taken as the dimension of the ghost. In the last step, the ghost is compared to previous ghosts recorded by the algorithm (if any) using its position in phase space and given an id based on first or repeated recording (\autoref{sfig:algorithm}a, step 5). The algorithm then repeats steps 3-5 until all $Q$-minima along the trajectory have been processed and returns the sequence of identified ghosts. Each ghost recorded by GhostID is stored as Python dictionary with 9 data fields that can be accessed later for further analysis by the user (\autoref{sfig:algorithm}b). Although our algorithm can detect all ghosts of type $j,k$ saddle-nodes, attracting ghosts naturally will be detected with a greater likelihood than non-attracting ghosts. To distinguish between both, the user can access all eigenvalues at $\textnormal{x}_{\textnormal{Q}_{\textnormal{min}}}$ using the dictionary field {\tt eigenvalues$\_$qmin} and evaluate the signs of all eigenvalues $|\lambda_i|\gg0$. In addition to the direct identification of ghosts via eigenvalue crossings, the algorithm features an alternative, indirect method by evaluating the gradients of the eigenvalues which is more sensitive to detect ghosts along a trajectory but can lead to false positive ghost identifications and thus needs to be explicitly enabled by the user (cf. Supplementary \autoref{sfig:falseGhost}).

As user input the algorithm requires a trajectory, the ODE system, its parameter values and the step size of the integration. The main hyper-parameters of the algorithm are the size $\varepsilon$ of the environment around the $Q$-minima and the distance $\delta$ in phase space between any two ghosts identified by the algorithm above which it considers them to be distinct. The package builds on existing scientific computing libraries within the Python ecosystem, including JAX \cite{jax2018github}, SciPy \cite{scipyref}, and NumPy \cite{numpyref}, making it computationally efficient, versatile, and highly customizable by the user. For instance, $Q$-minima along trajectories are identified via SciPy's {\tt find$\_$peaks} function on $pQ(t) := -\log(Q(t))$ which the user is able to control via all available hyper-parameters of {\tt find$\_$peaks}. To aid the choice of hyper-parameters, our Python implementation features two types of control plots: $pQ(t)$ showing identified $Q$-minima and evaluation of the ghost criteria along trajectory segments (including eigenvalues). 

GhostID as implemented in PyGhostID is used via the following function call:

\begin{lstlisting}[language=Python]
ghostID(model, params, dt, trajectory, epsilon_gid = 0.05, delta_gid=0.05, **kwargs)
\end{lstlisting}

where {\tt model} is the Python function describing the system dynamics, {\tt parameters} are the model parameters to be given as argument to {\tt model}, {\tt dt} is the stepsize and {\tt trajectory} is a trajectory simulated by the system. It returns {\tt ghostSeq}, a Python list of identified ghost states (Python dictionary) and, if {\tt return$\_$ctrl$\_$figs = True}, the figures for the control plots requested by the user. The hyper-parameters of the algorithm are as follows:

\begin{itemize}
    \item {\tt epsilon\textunderscore gid} (float): Radius of the $\varepsilon$-sphere around $Q$-minima determining the trajectory segments along which eigenvalues are evaluated. For many models considered here we found values in the range of $0.01$ to $0.1$ a reasonable choice, hence the default value is set to $0.05$. However, suitable values strongly depend on the model and the typical ranges of the phase space variables. Generally, the value should be chosen big enough such that trajectory segments consist of enough points to allow a reliable identification of the eigenvalue spectrum along the segment, but small enough so that the eigenvalues are still representative of the local phase-space topology around potential ghosts. A good strategy for choosing {\tt epsilon\textunderscore Qmin} is to start with small values and increase it until eigenvalue control plots look reasonably smooth.
    \item {\tt delta\textunderscore gid} (kwarg, float): Distance in phase space between two ghosts identified by GhostID above they are considered distinct and given different identifiers. Default value is $0.1$.
    \item {\tt peak\textunderscore kwargs} (kwarg, dictionary): Can contain any kwarg for SciPy's {\tt find\textunderscore peaks} function to improve the detection of $Q$-minima from peaks in the $pQ$-timeseries.
    \item {\tt evLimit} (kwarg, float): Default value is 0. Enables indirect method of identifying ghosts if {\tt evLimit}$>0$. A trajectory segment is considered to pass nearby a ghost if the following holds true: the absolute value of the median of the eigenvalues along the trajectory segment is below {\tt evLimit}, the linear fit of the eigenvalues along the segment has an $R^2\geq0.99$ and a slope that lies within the range given by the kwarg {\tt slopeLimits} (two-element array of floats, default is $[0,\infty]$).
    \item {\tt batchModel} (kwarg, function): Vectorized version of the model able to handle batch inputs. Either manually coded model function or made via {\tt make$\_$batch$\_$model}. While the algorithm calls {\tt make$\_$batch$\_$model} itself at its initialization, providing a pre-calculated batch model can be useful to improve performance when {\tt ghostID} is called many times using the same model.
    If at least one control plot is enabled, control plots will either been shown inline or returned if {\tt return\textunderscore ctrl\textunderscore figs} is set to {\tt True}
\end{itemize}

Eigenvalues are calculated by {\tt ghostID} using the {\tt jax.numpy.linalg.eigvals} function. However, there is no guarantee that the indexing of the eigenvalues along a trajectory segment remains the same for consecutive points along the segment, i.e. $\lambda_1$ at $t$ might be $\lambda_2$ at step $t+1$, thereby potentially interferring with the identification of ghosts. While we found the indexing to remain the same for the majority of cases, we also found cases in which eigenvalue indices were permuted. {\tt ghostID} has two complementary methods to deal with such cases, outlier removal and sorting of eigenvalues across time, which can be controlled as follows:

\begin{itemize}
\item {\tt ev$\_$outlier$\_$removal} (kwarg, boolean): Removes eigenvalues along a trajectory based on outlier detection, where eigenvalues are removed if, along a sliding window, they fall above $q_3 + k (q_3 - q_1)$ or below $q_1 - k (q_3 - q_1)$, where $k$ is a positive float, $q_1$ and $q_3$ the 25th and 75th percentile, respectively. Default value is {\tt False}.
\item {\tt ev$\_$outlier$\_$removal$\_$k} (kwarg, float): Determines the size of the range outside which eigenvalues are considered outliers. Default value is $1.5$.
\item {\tt ev$\_$outlier$\_$removal$\_$ws} (kwarg, float): Determines the size of the sliding windows in which eigenvalues are evaluated for outliers. Default value is $7$.
\item {\tt eigval$\_$NN$\_$sorting} (kwarg, boolean): Sorts eigenvalues $\lambda_i(t)$ across time for a given index $i$ by making a linear prediction $\lambda_p$ of the next real value of $Re(\lambda_i(t+1))$ and assigns $\lambda_i(t+1) = \lambda_j(t+1)$ for $1\leq j \leq n$ for which $|\lambda_p-\lambda_j(t+1)|$ is minimal . Default value is {\tt False}.
\end{itemize}

{\tt ghostID} also features several control outputs that are helpful for trouble-shooting and selecting hyper-parameters:

\begin{itemize}
\item {\tt display\textunderscore warnings} (kwarg, boolean): show/hide warning messages from GhostID.
\item {\tt ctrlOutputs}(kwarg, dictionary): Several keys can be used to plot the algorithm's two core quantities, $Q$- and eigenvalues, and customize the plots.
\begin{itemize}
    \item {\tt ctrl$\_$qplot} (boolean): enables plot of $pQ$-values and $Q$-minima along trajectory if set to {\tt True}.
    \item {\tt qplot$\_$xscale} (string): set scale of x-axis to {\tt "linear"} (default) or {\tt "log"}.
    \item {\tt qplot$\_$yscale} (string): set scale of y-axis to {\tt "linear"} (default) or {\tt "log"}.
    \item {\tt ctrl$\_$evplot} (boolean): if set to {\tt True}, a plot of eigenvalues along each trajectory segment around identified $Q$-minima is shown including information about the evaluation criteria for ghosts which are listed in the plot's heading.
    \item {\tt evplot$\_$xscale} (string): set scale of x-axis to {\tt "linear"} (default) or {\tt "log"}.
    \item {\tt evplot$\_$yscale} (string): set scale of y-axis to {\tt "linear"} (default) or {\tt "log"}.
\end{itemize}
   \item {\tt return\textunderscore ctrl\textunderscore figs}(kwarg, boolean): Returns control plots for manual customization of plot settings if set to {\tt True}. Default value is {\tt False}.
\end{itemize}

\subsection{Other core functionalities of PyGhostID}\label{subsec:PyGhostIDpackage}

\subsubsection{ghostID\textunderscore phaseSpaceSample}\label{subsec:ghostid_pSS}

The function {\tt ghostID\textunderscore phaseSpaceSample} simulates many trajectories within a specified range of the phase space from which initial conditions are chosen by Latin hypercube sampling and trajectories are evaluated by {\tt ghostID}. The function implements an adaptive parallel-processing routine to speed up simulation by using threads when run in Jupyter or processes for full CPU utilization when run as a standalone script. The function is called via:

\begin{lstlisting}[language=Python]
ghostID_phaseSpaceSample(model, model_params, t_start, t_end, dt, state_ranges, n_samples=50, method='RK45', rtol=1.e-3, atol=1.e-6, n_workers=None, **kwargs):
\end{lstlisting}

It returns a list containing a {\tt ghostSeq} (see subsection \ref{subsec:ghostIDparas}) for each trajectory. The arguments are as follows:  {\tt model} is the Python function describing the system dynamics, {\tt model$\_$parameters} are the model parameters to be given as argument to {\tt model}, {\tt t$\_$start} and {\tt t$\_$end} are beginning and endpoints of each trajectory, {\tt dt} is the stepsize,  {\tt state$\_$ranges} is a list with $n$ elements ($n$ being the system's dimension), each of which is a tuple specifying the lower and upper boundaries of the respective coordinates of the phase space region to be sampled, {\tt n$\_$samples} the number of trajectories to be simulated and analyzed for ghosts. {\tt method, rtol} and {\tt atol} are arguments for SciPy's {\tt solve\textunderscore ivp} routine used for simulating the model and control the method (default is `{\tt RK45}'), the relative and absolute tolerances for integration, respectively. {\tt n\textunderscore workers} is the number of cores used for parallel processing which chosen automatically if {\tt n\textunderscore worker = None}. Furthermore, {\tt ghostID\textunderscore phaseSpaceSample} accepts all keyword arguments that apply to {\tt ghostID}. Note, however, that showing control plots likely will not work when enabling parallel processing. In addition, there two more {\tt kwargs}:
\begin{itemize}
    \item {\tt delta\textunderscore unify} (kwarg, float): Determines, similar to {\tt delta\textunderscore gid}, the distance in phase space between two identified ghosts above which they are considered distinct and given different identifiers but \textit{across all trajectories of a phase space sample}. Default value is $0.1$.
    \item {\tt seed} (kwarg, integer): Choosing a specific seed enables exact reproducibility of the (pseudo-)random phase space sample. Default value is {\tt None}.
\end{itemize}

\subsubsection{ghost\textunderscore connections}

The function {\tt ghost\textunderscore connections} takes a list of {\tt ghostSeq}s (see subsection \ref{subsec:ghostIDparas}), either generated by 
{\tt ghostID\textunderscore phaseSpaceSample} or by collecting different runs of {\tt ghostID} (after unifying identifiers) in a list, as input and returns an adjacency matrix $(a_{ij})_{1\leq i,j \leq n}$ describing the connections between ghosts. The underlying algorithm is simple: for any two consecutive ghosts $G_i$ and $G_j$ within a {\tt ghostSeq}, set $a_{i,j} = 1$. The function is called without any hyperparameters via:

\begin{lstlisting}[language=Python]
ghost_connections(gSeqs)
\end{lstlisting}

\subsubsection{track\textunderscore ghost\textunderscore branch}

The function {\tt track\textunderscore ghost\textunderscore branch} allows the user follow a specific ghost as a system parameter is varied. The algorithm consists of the following steps:

\begin{enumerate}
    \item Start with position $x_g\in \mathbb{R}^n$ of a ghost $G$ at parameters $\rho\in\mathbb{R}^m$.
    \item Change $\rho_i$ by $\Delta \rho$.
    \item Within a small region around $x_g$, find the next closest $Q$ minimum $x_{Q_{min}}$ in phase space.
    \item Simulate a trajectory $T$ starting from a specified distance of $x_{Q_{min}}$.
    \item Apply GhostID to $T$ to find a ghost. If a ghost is found, save the ghost and repeat the process until no more ghost is found or the maximum number of repeats is reached.
\end{enumerate}

The function is called as follows:

\begin{lstlisting}[language=Python]
track_ghost_branch(ghost, model, model_params, par_nr, par_steps, dpar, t_end, dt, delta=0.5,icStep=0.1, mode="first",epsilon_gid=0.1,solve_ivp_method='RK45', rtol=1.e-3, atol=1.e-6,**kwargs)
\end{lstlisting}

It returns the positions in phase space of the ghosts found ({\tt ghostPositions}), the sequence of parameters at which these ghosts were found ({\tt parSeq}), and the full list of ghosts found at the different parameter values ({\tt ghostSeq\textunderscore p}) to enable extraction of other information about the ghosts in addition to the positions in phase space. If enabled, it also returns the control plots for GhostID.

The arguments and hyperparameters are as follows:

\begin{itemize}
    \item {\tt ghost} (python dictionary): Ghost to be tracked identified either by {\tt ghostID} or {\tt ghostID\textunderscore phaseSpaceSample}.
    \item {\tt model} (function): Python function describing the system dynamics.
    \item {\tt model\textunderscore params}: parameters for {\tt model}.
    \item {\tt par\textunderscore nr} (integer): index of parameter to be varied during tracking.
    \item {\tt par\textunderscore steps} (integer): number of times a parameter is updated during a sweep.
    \item {\tt dpar} (float): size of parameter increment (if {\tt dpar} is positive) or decrement (if {\tt dpar} is negative) per iteration.
    \item {\tt t\textunderscore end} (float): length of trajectory at each iteration step.
    \item {\tt dt} (float): step-size of numerical integration.
    \item {\tt delta} (float): size of region around $x_g$ in which to search for $x_{Q_{min}}$. Default value is $0.5$.
    \item {\tt icStep} (float): distance from $x_{Q_{min}}$ at which to initialize a trajectory which will be analyzed by GhostID. Default value is $0.1$.
    \item {\tt mode} (string): If {\tt mode = "first"} (default), the algorithm takes the first ghost of potentially multiple ghosts identified along the trajectory. If {\tt mode = "closest"}, the algorithm takes the ghost closest in phase space to $x_g$ of potentially multiple ghosts identified along the trajectory.
    \item {\tt epsilon\textunderscore gid} (float): see section \ref{subsec:ghostIDparas}.
    \item {\tt solve\textunderscore ivp\textunderscore method, rtol, atol}: arguments for numerical integration, see section \ref{subsec:ghostid_pSS}.
    \item {\tt distQminThr} (kwarg, float): constraint mandating that any ghost identified must not have a distance greater than {\tt distQminThr} from $x_{Q_{min}}$. Default value is $\infty$.
\end{itemize}

Additionally, the function can receive any {\tt kwarg} that applies to the {\tt ghostID} function (see section \ref{subsec:ghostIDparas}) and any {\tt kwarg} that applies to {\tt find\textunderscore local\textunderscore Qminimum} (see section \ref{subsec:helperFunctions}).

\subsection{Helper functions of PyGhostID}\label{subsec:helperFunctions}

In addition to the core functionalities, PyGhostID features several helper function that can be useful in different contexts:\\

\noindent \textbf{unify\textunderscore IDs}

Unifies IDs of ghosts across trajectories. Call via:
\begin{lstlisting}[language=Python]
unify_IDs(gSeqs, delta_unify=0.1, update=True)
\end{lstlisting}

Arguments:
\begin{itemize}
    \item {\tt seqs}: list of {\tt ghostSeq}s from several trajectories generated, e.g., by collecting different runs of {\tt ghostID} in a list.
    \item {\tt delta\textunderscore unify} (float): distance in phase space between two identified ghosts above which they are considered distinct and given different identifiers. Any two ghosts from different trajectories are given the same id if they are separated by a distance less than {\tt delta\textunderscore unify}. Default value is $0.1$. 
    \item {\tt update} (boolean). If {\tt update = True} (default), the algorithm also synchronizes $Q$-value and position based on the ghost with lower $Q$-value and dimension based on the ghost with the higher dimension.    
\end{itemize}

\noindent \textbf{unique\textunderscore ghosts}

Returns the number of unique ghosts based on the ids of the ghosts in a {\tt ghostSeq}. Call via:
\begin{lstlisting}[language=Python]
unique_ghosts(gSeq)
\end{lstlisting}

\noindent \textbf{make\textunderscore batch \textunderscore model}

Takes a model function and its parameters to create a vectorized version of the model to enable batch processing. Increases performance when running GhostID on many trajectories. Call via:
\begin{lstlisting}[language=Python]
make_batch_model(model, params)
\end{lstlisting}

\noindent \textbf{find\textunderscore local\textunderscore Qminimum}

Function searches for a $Q$-minimum within an neighborhood of radius $\delta$ around a given $x\in\mathbb{R}^n$ in phase space using either a sampling approach or global optimization method, followed optionally by local optimization. The function is called as follows:

\begin{lstlisting}[language=Python]
find_local_Qminimum(model,x0,model_params,delta,*,global_method="lhs",local_method="L-BFGS-B",global_options=None,local_options=None,verbose=False)
\end{lstlisting}

The arguments and hyperparameter are as follows:

\begin{itemize}
    \item {\tt model} (function): Python function describing the system dynamics.
    \item {\tt x0} (array): center of region in which to search for $Q$-minimum.
    \item {\tt model\textunderscore parameters}: parameters for {\tt model}.
    \item {\tt delta} (float): radius of neighborhood around {\tt x0} in which to search for $Q$-minimum.
    \item {\tt global method (string)}: 
    \begin{itemize}
        \item if {\tt global method = "lhs"} (default), the algorithm takes a Latin hypercube sample from the $\delta$-neighborhood around {\tt x0}, evaluates $Q$ at each point from the sample and returns a specified number of points with the lowest $Q$-values to be taken as starting point for local optimization.
        \item if {\tt global method = "differential\textunderscore evolution"}: uses differential evolution algorithm from {\tt scipy.optimize} to find a $Q$-minimum.
        \item if {\tt global method = "dual\textunderscore annealing"}: uses dual annealing algorithm from {\tt scipy.optimize} to find a $Q$-minimum.
         \item if {\tt global method = "basin\textunderscore hopping"}: uses basin hopping algorithm from {\tt scipy.optimize} to find a $Q$-minimum.
    \end{itemize}
    \item {\tt local\textunderscore method} (string): all methods that are available in {\tt scipy.optimize.minimize}, default is {\tt "L-BFGS-B"}.
    \item {\tt global\textunderscore options} (dict): use all allowed kwargs for global methods from {\tt scipy.optimize} or, if {\tt global method = "lhs"}, the following kwargs to control method:
    \begin{itemize}
        \item {\tt n\textunderscore samples} (integer): size of the LHS-sample. If {\tt None}, the sample size will be automatically calculated as {\tt min(2000, max(200, 20 * dim))}, where {\tt dim} is the dimension of the model.
        \item {\tt k\textunderscore seeds} (integer): number of points with the lowest $Q$-values to be taken as starting point for local optimization. f {\tt None}, the sample size will be automatically calculated as {\tt min(5, max(2, int(sqrt(dim))))}, where {\tt dim} is the dimension of the model.
        \item {\tt seed} (integer): Choosing a specific seed enables exact reproducibility of the (pseudo-)random LHS sample. Default value is {\tt None}.
    \end{itemize}
    \item {\tt local\textunderscore options} (dict): use all allowed kwargs for local methods from {\tt scipy.optimize.minimize}.
    \item {\tt verbose} (boolean): enables control outputs if set to {\tt True}. Default is {\tt False}.
\end{itemize}

\noindent \textbf{qOnGrid}

Function calculates the $Q$-values on a phase space grid. The function is called as follows:

\begin{lstlisting}[language=Python]
qOnGrid(model, model_params, coords=None, n_points=50, ranges=None, overrides=None, indexing="ij", jit=False)
\end{lstlisting}

The arguments and hyperparameters are as follows:

\begin{itemize}
    \item {\tt model} (function): Python function describing the system dynamics.
    \item {\tt model\textunderscore parameters}: parameters for {\tt model}.
    \item {\tt coords} (list of 1d arrays): the phase space grid on which to evaluate $Q$-values. If {\tt None}, the grid is calculated automatically according to following parameters:
    \begin{itemize}
        \item {\tt n\textunderscore points} (integer / list of integers): number of grid points along each dimension specified for all dimensions at once or individually via list of integers. Default is $50$.
        \item {\tt ranges} (tuple / list of tuples). Upper and lower bounds of grid points along each dimension specified for all dimensions at once or individually via list of tuples. Default is $(-2,2)$.
        \item {\tt overrides} (dict): dictionary to override {\tt n\textunderscore points} and/or {\tt ranges} for specific individual axes, without changing the other ones, e.g., {\tt {1: {"n": 100, "range": (-5.0, 5.0)}}}.
        \item {\tt indexing} (string): indexing used for {\tt jnp.meshgrid} function from jax. Default is {\tt "ij"}.
        \item {\tt jit} (boolean): enables jit to speed up performance if {\tt True}. Default is {\tt False}.
    \end{itemize}

\end{itemize}

\noindent \textbf{draw\textunderscore network}

Takes adjacency matrix (e.g., generated by {\tt ghost\textunderscore connections}) to draw network via networkX package. Call function as follows:

\begin{lstlisting}[language=Python]
draw_network(adj_matrix, nodeCols, nlbls, layout="fdp", graphviz_args=None, layout_kwargs=None, rankdir="TB", node_size=1800, label_font_size=16.5, font="Arial")
\end{lstlisting}

The arguments and hyperparameter are as follows:

\begin{itemize}
\item  {\tt adj\textunderscore matrix} (2d array): adjacency matrix to be drawn as network. Entries of $1$ will lead to black edges, entries of $-1$ will lead to red edges.
\item {\tt nodeCols} (list with colors): Colors of nodes in network.
\item {\tt nlbls} (list of strings): Node labels.
\item {\tt layout} (layout): Graphviz layout ({\tt `fdp', `dot'/`hierarchical', `neato', `sfdp', `circo'}) or any {\tt nx.*\textunderscore layout} function from networkX. Default is Graphviz {\tt `fdp'}.
\item  {\tt graphviz\textunderscore args} (string): arguments for Graphviz layouts. If {\tt None}, default string is {\tt f"-Grankdir={rankdir} -Nwidth=350 -Nheight=350 -Nfixedsize=true -Goverlap=scale -Gnodesep=5000 -Granksep=200 -Nshape=oval -Nfontsize=14 -Econstraint=true"}.
\item {\tt layout\textunderscore kwargs} (dict): kwargs for {\tt nx.*\textunderscore layout} function from networkX.
\item {\tt rankdir} (string): Direction of node ordering for Graphiviz layout {\tt `dot'}. Default is {\tt `TB'}, from top to bottom.
\item {\tt node\textunderscore size} (float): Size of nodes in network. Default is $1800$.
\item {\tt label\textunderscore font\textunderscore size} (float): Size of node labels. Default is $16.5$.
\item {\tt font} (string): Font type for node labels. Default is {\tt "Arial"}.
\end{itemize}

\subsection{Further numerical validation of GhostID}\label{subsection:numVal}

\subsubsection{Additional models with long transients due to ghosts}\label{subsection:numVal_ghosts}

\textbf{EGF-receptor model}\\
A minimal model EGR-receptor activity that captures experimentally the observed ghost dynamics in \cite{Stanoev2018} can be given by the equations \cite{Stanoev_2020}:
\begin{align*}
\dot R_a &= k_R(R_i(\alpha_1R_i + \alpha_2 R_a + \alpha_3 LR_a)-\hat{\gamma}_{DNF}P_{DNF,a}R_a),\\
\dot P_{DNF,a} &=k_1(P_{DNF,i}-k_{2/1}P_{DNF,a}-\hat{\beta}_{DNF}P_{DNF,a}(R_a+LR_a)),
\end{align*}

where $R_a$ and $R_i=1-R_a$ are the active and inactive forms of the ligandless EGF-receptor, $LR_a$ is the active ligand-bound form of the receptor which is treated as input parameter, and $P_{DNF,a}$ and $P_{DNF,i} = 1-P_{DNF,a}$ are active and inactive form of the phosphatases inactivating the receptor. Parameters $\alpha_{1-3}$ are receptor activation rate constants, $k_R$ and $k_1$ are kinetic constants that do not influence the steady-state values of the system, $\hat{\gamma}_{DNF}$ the specific reactivity of the enzyme towards the receptor, and $\hat{\beta}_{DNF}$ and $k_{2/1}$ are the receptor-induced regulation rate constant and the inactivation/activation constant ratio of $P_{DNF}$, respectively. As shown in Supplementary \autoref{sfig:numval_ghosts}a, GhostID correctly identified the ghost in this system.
\\ \\

\textbf{Charge density wave model}\\
We further consider the following model for delayed switching of charge-density waves shown to have ghost dynamics in \cite{Strogatz1989}:
\begin{align*}
\dot \psi &= E + \frac{1}{2}(\sin(\psi+\phi)+sin(\psi-\phi)),\\
\dot \phi &= \frac{1}{1+2\gamma}(-2K\psi\phi + \frac{1}{2}(\sin(\psi+\phi)+sin(\psi-\phi))),
\end{align*}

where $\psi$ and $\phi$ are phase differences between the two coupled domains and the current carried by the CDW is proportional to $\dot\psi$, $E$ is proportional to the applied electric field, $\gamma$ is the viscous and $K$ the elastic coupling strength, respectively. The system has several ghosts that are reliably identified by GhostID (Supplementary \autoref{sfig:numval_ghosts}b).

\textbf{Gene regulatory network model}\\
The following model by Farjami \textit{et al}. \cite{Farjami_2021} has been proposed to explain differentiation of stem cells by cycling between transient but metastable states as an alternative model to the classical landscape picture by C. Waddington \cite{Kelsh2021}:

\begin{align*}
    \dot{x}_1 &= b_1 + \frac{g_1}{(1 + \alpha_1 x_2^h)(1 + \beta_1 x_3^h)} - d_1 x_1 \\
    \dot{x}_2 &= b_2 + \frac{g_2}{(1 + \alpha_2 x_3^h)(1 + \beta_2 x_1^h)} - d_2 x_2 \nonumber\\
    \dot{x}_3 &= b_3 + \frac{g_3}{(1 + \alpha_3 x_1^h)(1 + \beta_3 x_2^h)} - d_3 x_3, \nonumber
\end{align*}

where $x_i$ are the gene products (proteins) that act as transcription factors, $b_i$ and $g_i$ are background and maximal gene expression rates, $\alpha_i$ and $\beta_i$ are association constants of the transcription factors to DNA, $h_{i}$ are Hill coefficients describing cooperativity, and $d_i$ are the degradation rates. Given sufficient cooperativity and assuming for simplicity that all gene expression rates share the same parameters, the system starts oscillating via a Hopf-bifurcation as $g_i$ increases before oscillations are terminated again by three simultaneously occuring SNIC bifurcations upon further increase of $g_i$ \cite{Farjami_2021}. At $g_i$ close to the SNIC bifurcations, the system transitions periodically between three ghost states, forming what we call a ghost cycle \cite{Farjami_2021,Koch_2024}. Applying GhostID to a trajectory from this system close to the SNIC bifurcations, again, correctly identifies three ghosts in phase space, each of which is identified by a single eigenvalue crossing from $<0$ to $>0$ (Supplementary \autoref{sfig:numval_ghosts}c)\\

\subsubsection{Additional model with long transients due to saddles}\label{subsection:numVal_saddles}

Consider the classical May-Leonard cycle of winner-less competition between three species \cite{May1975}:

\begin{align*}
    \dot{N}_1 &= N_1 \left(1 - N_1 - \alpha N_2 - \beta N_3\right), \\
    \dot{N}_2 &= N_2 \left(1 - \beta N_1 - N_2 - \alpha N_3\right), \nonumber \\
    \dot{N}_3 &= N_3 \left(1 - \alpha N_1 - \beta N_2 - N_3\right), \nonumber
\end{align*}

where $N_i$, $\alpha$ and $\beta$ are the non-dimensionalized species abundances and competition parameters, respectively. For the parameter regime shown in Supplementary \autoref{sfig:numval_HC} (left), the system exhibits a stable heteroclinic cycle in which deterministic trajectories get closer and closer to the saddles of the heteroclinic connections with each iteration, leading to increasingly long transients. None of the eigenvalues along the trajectory segments of the corresponding $Q$-minima changes sign, hence GhostID does not identify any ghosts along the trajectory (Supplementary \autoref{sfig:numval_HC} middle and right).

\subsubsection{Additional models with long transients due to slow-fast dynamics}\label{subsection:numVal_slowFast}

\textbf{2D slow-fast toy model}\\
Consider the following model from \cite{Kuehn_2015}:
\begin{align*}
    \dot{x} &= \epsilon - x, \\
    \dot{y} &= \epsilon y^2. \nonumber
\end{align*}

Since the slow manifold of this system is divided by a saddle-node into two halves, we apply GhostID on two trajectories, one that approaches the saddle-node and one that is repelled by it (Supplementary \autoref{sfig:numval_SF}a, left). While in the first case, we see only a monotonic decrease in $Q(t)$, the second trajectory shows a pronounced $Q$-minimum underlying a long transient (Supplementary \autoref{sfig:numval_SF}a, middle). Yet, none of the eigenvalues changes sign, hence the long transient on the slow manifold is not identified as a ghost (Supplementary \autoref{sfig:numval_SF}a, right).\\

\textbf{Michaelis-Menten model of enyzme kinetics}\\
Considering the classical Michaelis-Menten mechanism of enzymatic catalysis as a singular perturbation problem \cite{Heineken1967}, one obtains explicit time-scale separation that with the suitable non-dimensionalization \cite{roussel2005singular} leads to:

\begin{align*}
\dot s &= \beta c (1 - \alpha) - s (1 - \alpha c), \\
\dot c &= \frac{1}{\mu}(s (1 - \alpha c) - c (1 - \alpha)),
\end{align*}

where $\alpha$ and $\beta$ are dimensionless parameters defined in terms of the original rate constants and the Michaelis-constant, and $\mu$ is the timescale parameter. As shown in \autoref{sfig:numval_SF}b, GhostID correctly does not identify long transients in this model as ghosts.
\\ \\
\textbf{Slow-fast dynamics in model from Hastings \textit{et al.} 2018}\\
The model described by equation \ref{eq:Hastings} from \cite{Hastings_2018} used to test GhostID on saddle crawl-bys also features a parameter regime with classical slow-fast dynamics. As shown in Supplementary \autoref{sfig:numval_SF}c, GhostID correctly does not identify long transients at this regime as ghosts.
\\ \\
\textbf{van der Pol oscillator}\\
As a last example, we consider the van der Pol model \cite{van_der_Pol_1928}, a classical slow-fast system given by the equations \cite{Kuehn_2015}:

\begin{align}\label{eq:Vdp}
    \nonumber \dot{x} &= \frac{1}{\epsilon}(x-\frac{x^3}{3}+y)  \\ 
    \dot{y} &= -x,
\end{align}

\noindent where $\epsilon$ is the timescale parameter.  As shown in Supplementary \autoref{sfig:numval_SF}d, GhostID correctly does not identify long transients in this model as ghosts.

\subsection{Eigenvalues along trajectories in neighborhoods of hyperbolic fixed points}\label{subsec:theorem_vicinityHypFP}

\begin{theorem}[Eigenvalues along trajectories in neighborhoods of hyperbolic fixed points]
\label{theorem_vicinityHypFP}
Let $\dot{x} = \textbf{f}(x)$, $\textbf{f}:\mathbb{R}^n\to\mathbb{R}^n$, be a smooth, autonomous dynamical system with a hyperbolic fixed point $x^*$. Then there exists a $\delta>0$ such that the real parts of the instantaneous eigenvalues at the points along any trajectory within $\mathcal{U}_\delta(x^*)$ do not change sign.
\end{theorem}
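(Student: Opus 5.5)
The plan is a continuity argument for the spectrum of the Jacobian $J(x)=D_x\textbf{f}(x)$. Since $\textbf{f}$ is smooth, the map $x\mapsto J(x)$ is continuous. The eigenvalues of a matrix depend continuously on its entries: the roots of the characteristic polynomial, as an unordered $n$-tuple, depend continuously on its coefficients. Hence the spectrum of $J(x)$ varies continuously with $x$ in the sense of unordered tuples, for example with respect to the optimal-matching distance.

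First, I exploit hyperbolicity at $x^*$. Let $\lambda_1,\dots,\lambda_n$ be the eigenvalues of $J(x^*)$, listed with multiplicity. By hypothesis all $\textnormal{re}(\lambda_i)\neq 0$, so I set
\[
\eta:=\min_i|\textnormal{re}(\lambda_i)|>0 .
\]
By the continuity just described (or by Rouch\'e's theorem applied to the characteristic polynomial on small discs around each distinct $\lambda_i$), I choose $\delta>0$ with the following property. For every $x\in\mathcal{U}_\delta(x^*)$, the eigenvalues of $J(x)$ can be matched one-to-one, counting multiplicity, with those of $J(x^*)$ so that matched pairs are within distance $\eta/2$. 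Concretely, I take discs of radius $r<\eta/2$ around each distinct eigenvalue, small enough that the discs are disjoint. Each disc then lies entirely in the open left or open right half-plane, and for $x$ close enough to $x^*$ the number of eigenvalues of $J(x)$ in each disc equals the multiplicity of the corresponding eigenvalue of $J(x^*)$. It follows that for all $x\in\mathcal{U}_\delta(x^*)$, the number of eigenvalues of $J(x)$ with positive real part equals $k$ (the unstable count at $x^*$), the number with negative real part equals $n-k$, and none lies on the imaginary axis.

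Next, I translate this into the statement about trajectories. Take any trajectory segment $x(t)$, $t\in I$, contained in $\mathcal{U}_\delta(x^*)$. Along it, $t\mapsto J(x(t))$ is continuous, so the eigenvalues can be followed as continuous branches $\lambda_i(t)$ on the interval $I$. This is a standard fact for continuous matrix families over an interval, and it is the step I would cite carefully. Each branch stays inside the union of the discs. Since the branch is continuous and the discs are disjoint, it remains inside a single disc, which lies in a fixed open half-plane. Therefore $\textnormal{re}(\lambda_i(t))$ never vanishes and never changes sign along the segment. If one prefers to avoid branch-selection issues, the index-free version suffices: the sign pattern (exactly $k$ positive and $n-k$ negative real parts) is constant on $\mathcal{U}_\delta(x^*)$. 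That is exactly what rules out a negative-to-positive crossing in GhostID.

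The main obstacle is making ``the eigenvalues along a trajectory'' well defined, since eigenvalues carry no canonical labelling and may collide, as the remarks on index permutations in GhostID suggest. I would handle this in two ways. First, I would state the result in the index-free form via the disc-counting argument. Second, I would note that any continuous labelling, which exists on an interval by Kato's selection theorem, cannot cross the imaginary axis because of the gap $\eta/2$. The trajectory itself plays no real role beyond staying in $\mathcal{U}_\delta(x^*)$; in particular, no invariance or ODE dynamics are needed.
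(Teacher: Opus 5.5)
Your argument is correct and is essentially the paper's proof. Both combine continuity of the Jacobian and continuous dependence of eigenvalues on matrix entries with the hyperbolic gap $\min_i|\textnormal{re}(\lambda_i)|>0$ at $x^*$. Your disc-counting (Rouch\'e) step and the continuous selection along the trajectory interval simply make rigorous the eigenvalue labelling that the paper takes for granted when it treats each $\lambda_i(x)$ as a continuous function on $\mathcal{U}_{\delta_i}(x^*)$.
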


\begin{proof} Since $\textbf{f}$ is smooth, $\partial f_i/ \partial x_j$ exists and is continuous for all $1\leq i,j\leq n$. Recall that the eigenvalues of a square matrix (real or complex) depend continuously on its entries (cf. Theorem 2.4.9.2. from \cite{HornJohnson2013}), hence the eigenvalues $\lambda_i(x)$ of the Jacobian, \[D_f(x)=  \begin{bmatrix}
\frac{\partial f_1}{\partial x_1} &  \cdots & \frac{\partial f_1}{\partial x_n} \\
\vdots & \ddots & \vdots \\
\frac{\partial f_n}{\partial x_1}& \cdots & \frac{\partial f_n}{\partial x_n}
\end{bmatrix},
\] depend continuously on $x$. As $x^*$ is hyperbolic, $Re(\lambda_i)\neq0$. For each $\varepsilon_i>0$, by continuity of $\lambda_i(x)$, there is a $\delta_i>0$, such that for $x \in \mathcal{U}_{\delta_i}(x^*)$ we have $|\lambda_i(x^*)-\lambda_i(x)|<\varepsilon_i$. In particular, we can choose $\varepsilon_i$ small enough for $Re(\lambda_i(x))$ to have the same sign $Re(\lambda_i(x^*))$. For $\delta=\min \{\delta_1, \dots, \delta_n\}$, the eigenvalue $\lambda_i(x)$ at any point $x$ of a trajectory $T\subset\mathcal{U}_\delta(x^*)$ has thus the same sign as $\lambda_i(x^*)$ for all $1\leq i \leq n$.
\end{proof}

\subsection{Evaluation of equilibrium in coupled theta neuron model}\label{subsec:theorem_jKSNinThetaNeurons}

\begin{theorem}[Existence of a type $2,0$ saddle-node equilibrium in coupled theta neurons] \label{theorem_jKSNinThetaNeurons}
For $\eta_1 = \eta_2  = 0$ and $\kappa >0$, system \ref{eq:augustson} has a type $2,0$ saddle-node equilibrium at the origin.
\end{theorem}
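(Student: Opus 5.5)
We verify conditions (SN1)–(SN4) of Definition \ref{def:SN} directly at $x^*=(0,0)$, with $n=j=2$, $k=0$, and with the parameter vector $\rho=(\eta_1,\eta_2)$ evaluated at $\rho_{\textnormal{crit}}=(0,0)$. The right-hand side of \autoref{eq:augustson} is built from trigonometric polynomials, so it is analytic. It vanishes at the origin because $1-\cos 0=0$, and at $\eta_i=0$ the coupling term $\kappa(1-\cos\theta_j)$ also vanishes there. The first step is the Jacobian. Differentiating $f_1$ with respect to $\theta_1$ gives
\[
\sin\theta_1-\sin\theta_1\,(\eta_1+\kappa(1-\cos\theta_2)),
\]
which is $0$ at the origin. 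Differentiating $f_1$ with respect to $\theta_2$ gives
\[
(1+\cos\theta_1)\,\kappa\sin\theta_2,
\]
which is also $0$ there. By symmetry $D_xf(0,0)$ is the $2\times2$ zero matrix. It follows that $0$ is an eigenvalue of algebraic and geometric multiplicity $2$, hence semi-simple. We take $v_1=w_1=e_1$ and $v_2=w_2=e_2$, which are trivially continuous in the parameters. This gives (SN1). Condition (SN4) holds with $k=0$ and $n-j-k=0$, since there are no remaining eigenvalues.

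For (SN2) we compute $\partial f/\partial\eta_1=(1+\cos\theta_1,0)^T$, which equals $(2,0)^T$ at the origin, and $\partial f/\partial\eta_2=(0,2)^T$. Then $w_1(\partial f/\partial\eta_1)=2\neq0$ and $w_2(\partial f/\partial\eta_2)=2\neq0$, as required.

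The substantive step is (SN3). The kernel is all of $\mathbb{R}^2$, so for every nonzero $u=(u_1,u_2)$ we must show that some component of $D_x^2f(0)(u,u)$ is nonzero. We compute the Hessians at the origin. For $f_1$, the second derivative in $\theta_1$ is $\cos\theta_1-\cos\theta_1(\eta_1+\kappa(1-\cos\theta_2))$, which equals $1$ at the origin. The mixed derivative is $-\sin\theta_1\,\kappa\sin\theta_2$, which equals $0$. The second derivative in $\theta_2$ is $(1+\cos\theta_1)\kappa\cos\theta_2$, which equals $2\kappa$. Therefore
\[
w_1(D^2f(u,u))=u_1^2+2\kappa u_2^2, \qquad w_2(D^2f(u,u))=u_2^2+2\kappa u_1^2 .
\]
Since $\kappa>0$, both quadratic forms are positive definite, so each is nonzero for every $u\neq0$. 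A single $i$ (say $i=1$) therefore works for all $u$, which establishes (SN3).

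The only real obstacle is keeping the second-derivative bookkeeping correct, in particular the mixed term and the $\eta$- and $\kappa$-dependent pieces. Note that $\kappa>0$ is not even needed for (SN3): $u_1^2+2\kappa u_2^2$ and $u_2^2+2\kappa u_1^2$ cannot vanish simultaneously for $u\neq0$ as long as $\kappa\geq0$. The hypothesis $\kappa>0$ simply makes each form individually definite. Having verified (SN1)–(SN4), the origin is a type $2,0$ saddle-node equilibrium, and $((0,0),(0,0))$ is a type $2,0$ saddle-node bifurcation point.
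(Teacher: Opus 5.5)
Your proposal is correct and follows the same route as the paper's proof. Both arguments show that the Jacobian at the origin is the zero matrix, take $v_i=w_i=e_i$, check (SN2) via $\partial f/\partial\eta_1=(2,0)^T$ and $\partial f/\partial\eta_2=(0,2)^T$, and verify (SN3) from the Hessians $\mathrm{diag}(1,2\kappa)$ and $\mathrm{diag}(2\kappa,1)$; your version is slightly cleaner, because your intermediate derivative formulas (the factor $(1+\cos\theta_1)$ in $\partial^2 f_1/\partial\theta_2^2$ and the $\kappa$ in the mixed partial) are correct where the paper's contain typos, and your positive-definiteness argument for (SN3) is more direct than the paper's contrapositive phrasing.
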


\begin{proof} 
Denote $\mathbf{\theta} = (\theta_1,\theta_2)$. For $\theta=(0,0)$ we have 

\begin{align*}
\dot{\theta}_1 &= 1 - \cos\theta_1 + \left(1 + \cos\theta_1\right) \left(\eta_1 + \kappa \left(1 - \cos\theta_2\right)\right)\\ 
&= 1 - \cos0 + \left(1 + \cos0\right) \left(0 + \kappa \left(1 - \cos0\right)\right)\\
&= 1 - 1 + \left(1 + 1\right) \left(0 + \kappa \left(1 - 1\right)\right) = 0\\
\end{align*}
and analogously $\dot \theta_2=0$, showing that the origin is an equilibrium point.
We next prove that the origin fulfills (SN1) from definition \ref{def:SN}. The Jacobian matrix of system \ref{eq:augustson} is
$$D_{\theta} f(\theta,(\eta_1,\eta_2,\kappa))=\begin{bmatrix}
\sin\theta_1(1-\eta_1-\kappa(1-\cos\theta_2)) & \kappa\sin\theta_2(1+\cos\theta_1) \\
\kappa\sin\theta_1(1+\cos\theta_2) & \sin\theta_2(1-\eta_2-\kappa(1-\cos\theta_1)) 
\end{bmatrix}.$$ We thus have  
$$D_{\theta} f((0,0),(0,0,\kappa))=\begin{bmatrix}
0 & 0 \\
0 & 0
\end{bmatrix},$$ making the equilibrium at the origin non-hyperbolic with eigenvalues $\lambda_1=\lambda_2=0$. That is, have a geometric multiplicity of $2$ for the eigenvalue $0$. To find the eigenvectors $w\in \mathbb{R}^2$, we need to solve $$(D_{\theta} f((0,0),(0,0,\kappa))-\lambda_i I)v=
\begin{bmatrix}
-\lambda_i & 0 \\
0 & -\lambda_i
\end{bmatrix}
\begin{pmatrix}
v_1 \\
v_2
\end{pmatrix}
= \begin{pmatrix} 0 \\ 0\end{pmatrix}.
$$
Since $\lambda_i=0$, every non-zero vector $v\in\mathbb{R}^2$ fulfills above equation and is thus an eigenvector of $\lambda_i$. Therefore, the center eigenspace is $E^c=\{v: (D_{\theta} f((0,0),(0,0,\kappa))-\lambda_i I) \ v=0\} = \mathbb{R}^2$. Since $\textnormal{dim}(E^c)=2$, we have a geometric multiplicity of $2$, making $\lambda_i$ semi-simple with eigenvalue $v_i$ to the right. Since the Jacobian is symmetric, we can simply set $w_i=v_i^T$ as the eigenvector to the left. Hence (SN1) is satisfied.\\ \\
We next show that (SN2) is fulfilled. Denote $$f((\theta_1,\theta_2),(\eta_1,\eta_2,\kappa)) = \begin{pmatrix}  1 - \cos\theta_1 + \left(1 + \cos\theta_1\right) \left(\eta_1 + \kappa \left(1 - \cos\theta_2\right)\right) \\  1 - \cos\theta_2 + \left(1 + \cos\theta_2\right) \left(\eta_2 + \kappa \left(1 - \cos\theta_1\right)\right)\end{pmatrix}.$$
Then $\frac{\partial f}{\partial \eta_1}((\theta_1,\theta_2),(\eta_1,\eta_2,\kappa)) = \begin{pmatrix}1 + \cos\theta_1 \\ 0\end{pmatrix}$, so $w_i\frac{\partial f}{\partial \eta_1}((0,0),(0,0,\kappa))=w_i\begin{pmatrix}2\\0\end{pmatrix}$. Analogously we have $\frac{\partial f}{\partial \eta_2}((\theta_1,\theta_2),(\eta_1,\eta_2,\kappa)) = \begin{pmatrix}0\\1 + \cos\theta_2\end{pmatrix}$, so $w_i\frac{\partial f}{\partial \eta_2}((0,0),(0,0,\kappa))=w_i\begin{pmatrix}0\\2\end{pmatrix}$. Since $w_i\neq\begin{pmatrix}0 \ 0\end{pmatrix}$, we thus have $w_i\frac{\partial f}{\partial \eta_n}((0,0),(0,0,\kappa))\neq0$ for either $n=1$ or $n=2$, showing that (SN2) is fulfilled.
\\ \\
We now show that (SN3) is fulfilled, too. Denote $f_1=\dot\theta_1$ and $f_2=\dot\theta_2$, giving the the following second order partial derivatives:

\begin{align*}
    \frac{\partial^2 f_1}{\partial^2 \theta_1} &= \cos\theta_1(1-\eta_1-\kappa(1-\cos\theta_2)),\quad
    \frac{\partial^2 f_1}{\partial \theta_1 \partial \theta_2} = -\sin\theta_1\sin\theta_2,\quad
    \frac{\partial^2 f_1}{\partial^2 \theta_2} = \kappa\cos\theta_2(1-\cos\theta_1),\\
    \frac{\partial^2 f_2}{\partial^2 \theta_1} &= \kappa\cos\theta_1(1-\cos\theta_2),\quad
    \frac{\partial^2 f_2}{\partial \theta_1 \partial \theta_2} = -\sin\theta_1\sin\theta_2, \quad
    \frac{\partial^2 f_"}{\partial^2 \theta_2} =  \cos\theta_2(1-\eta_2-\kappa(1-\cos\theta_1)).
\end{align*}
 The Hessians are thus:

 $$
 H_1 = \begin{bmatrix}
\frac{\partial^2 f_1}{\partial^2 \theta_1} & \frac{\partial^2 f_1}{\partial \theta_1 \partial \theta_2}
\\ \frac{\partial^2 f_1}{\partial \theta_1 \partial \theta_2} & \frac{\partial^2 f_1}{\partial^2 \theta_2} 
 \end{bmatrix}
 =
\begin{bmatrix}
\cos\theta_1(1-\eta_1-\kappa(1-\cos\theta_2)) &  -\sin\theta_1\sin\theta_2 \\
 -\sin\theta_1\sin\theta_2 &\kappa\cos\theta_2(1-\cos\theta_1) 
 \end{bmatrix},$$ so $H_1|_{(\theta_1,\theta_2)=(0,0),\eta_1=\eta_2=0}=
 \begin{bmatrix}
     1 & 0 \\ 0 & 2\kappa \\
 \end{bmatrix}.$
 Analogously, we get to
 $$
 H_2 = \begin{bmatrix}
\frac{\partial^2 f_1}{\partial^2 \theta_1} & \frac{\partial^2 f_2}{\partial \theta_1 \partial \theta_2}
\\ \frac{\partial^2 f_2}{\partial \theta_1 \partial \theta_2} & \frac{\partial^2 f_2}{\partial^2 \theta_2} 
 \end{bmatrix}
 =
\begin{bmatrix}
\kappa\cos\theta_1(1-\cos\theta_2) &  -\sin\theta_1\sin\theta_2 \\
 -\sin\theta_1\sin\theta_2 &\cos\theta_2(1-\eta_2-\kappa(1-\cos\theta_1)) 
 \end{bmatrix},$$ so $H_2|_{(\theta_1,\theta_2)=(0,0),\eta_1=\eta_2=0} =
 \begin{bmatrix}
     2\kappa & 0 \\ 0 & 1 \\
 \end{bmatrix}.$ 
\\ Now let $u = \begin{pmatrix}a\\b \end{pmatrix}\in \ker(D_{\theta} f((0,0),(0,0,\kappa)))$, then $u^TH_1u=(a\ b)\left(\begin{pmatrix}
     1 & 0 \\ 0 & 2\kappa \\
 \end{pmatrix}\begin{pmatrix}
     a\\b
 \end{pmatrix}\right)=a^2+2\kappa b^2$ and
 $u^TH_2u=(a\ b)\left(\begin{bmatrix}
     2\kappa & 0 \\ 0 & 1 \\
 \end{bmatrix}\begin{pmatrix}
     a\\b
 \end{pmatrix}\right)=2\kappa a^2+b^2$. Hence $D_\theta^2f((0,0),(0,0,\kappa))(u,u)=\begin{pmatrix}
     a^2+2\kappa b^2\\2\kappa a^2+b^2
 \end{pmatrix}.$

Suppose $w_iD_\theta^2f((0,0),(0,0,\kappa))(v,v)= c(a^2+2\kappa b^2)+d(2\kappa a^2+b^2)=0$
for all non-zero $w_i = (c \ d)$, which is only possible if $a^2+2\kappa b^2=0$ and $2\kappa a^2+b^2=0$. Since $\kappa>0$, it follows that $a=b=0$, so $v=\begin{pmatrix} 0\\0 \end{pmatrix}$. Thus if $v = \begin{pmatrix}a\\b \end{pmatrix}\in \ker(D_{\theta} f((0,0),(0,0,\kappa)))$, by contraposition it follows that if $v \neq \begin{pmatrix} 0\\0\end{pmatrix}$, then $w_iD_\theta^2f((0,0),(0,0,\kappa))(v,v)\neq 0$ for at least one $1\leq i\leq j$, satisfying (SN3).\\

 Finally, since we showed that $D_{\theta} f((0,0),(0,0,\kappa))$ has no other eigenvalues than $0$, it is $k=0$ (SN4), completing the proof.
\end{proof}

\subsection{Counter-example to framework for smooth systems}\label{subsec:smoothnessIssue}

Replacing the assumption that $f$ is analytic by the assumption that $f$ is smooth, one can construct a counter-example to our definitions due to Peter Ashwin (personal communication, 2026) as follows:

Consider a planar system whose Jacobian is given by

\begin{align*}
D_f(p) = \label{eq:2}
\begin{cases}
\begin{bmatrix}
q & a(p) \\
0 & r
\end{bmatrix}, & $ p> 0$ \\[6pt]
\begin{bmatrix}
q & 0 \\
0 & r
\end{bmatrix}, & $p=0$ \\
\begin{bmatrix}
q & 0 \\
a(-p)
& r
\end{bmatrix}, & $p<0$\\
\end{cases}
\end{align*}

with parameters $p, q$ and $r$ and where $a(p)$ goes smoothly to zero (e.g., $a(p)=exp(-1/p^2)$). Then for $\rho_{\textnormal{crit}}=(p_{\textnormal{crit}},q_{\textnormal{crit}},r_{\textnormal{crit}})=(0,0,0)$, we have a
has double zero eigenvalue with geometric and algebraic multiplicity two, satisfying (SN1). Since \begin{equation*}
\frac{\partial D_f}{\partial q}\bigg|_{\rho_\textrm{crit}} = 
\begin{bmatrix}1&0\\0&0\end{bmatrix}, \qquad 
\frac{\partial D_f}{\partial r}\bigg|_{\rho_\textrm{crit}} = 
\begin{bmatrix}0&0\\0&1\end{bmatrix},
\end{equation*}
so choosing $w_1 = (1,0)$ and $w_2=(0,1)$ as left eigenvectors, we get 
$w_1(\partial f/\partial q)\neq 0$ and $w_2(\partial f/\partial r)\neq 0$, 
satisfying (SN2). However, for $p\neq0$, there is only a unique eigenvector which discontinuously flips from $\begin{pmatrix} 1\\0\end{pmatrix}$ to $\begin{pmatrix} 0\\1\end{pmatrix}$ on $p$ passing through $0$, hence violating conjecture \ref{eigvalConjecture}. While we do not construct an explicit ODE realizing $D_f(p)$ as a Jacobian at a ghost, the example suffices to illustrate the mechanism by which $C^\infty$ but non-analytic perturbations can violate the conjecture. However, it is likely that our definitions can be adapted for smooth systems to avoid such issues by assuming that eigenvectors vary continuously with parameters (albeit at the cost of having to check another condition).

\section{ Supplementary Figures}
\renewcommand{\figurename}{Supplementary Fig.}
\counterwithin{figure}{section}
\renewcommand\thefigure{\arabic{figure}}
\setcounter{figure}{0} 

\begin{figure}[h!]
	\centering
	\includegraphics[width=\textwidth]{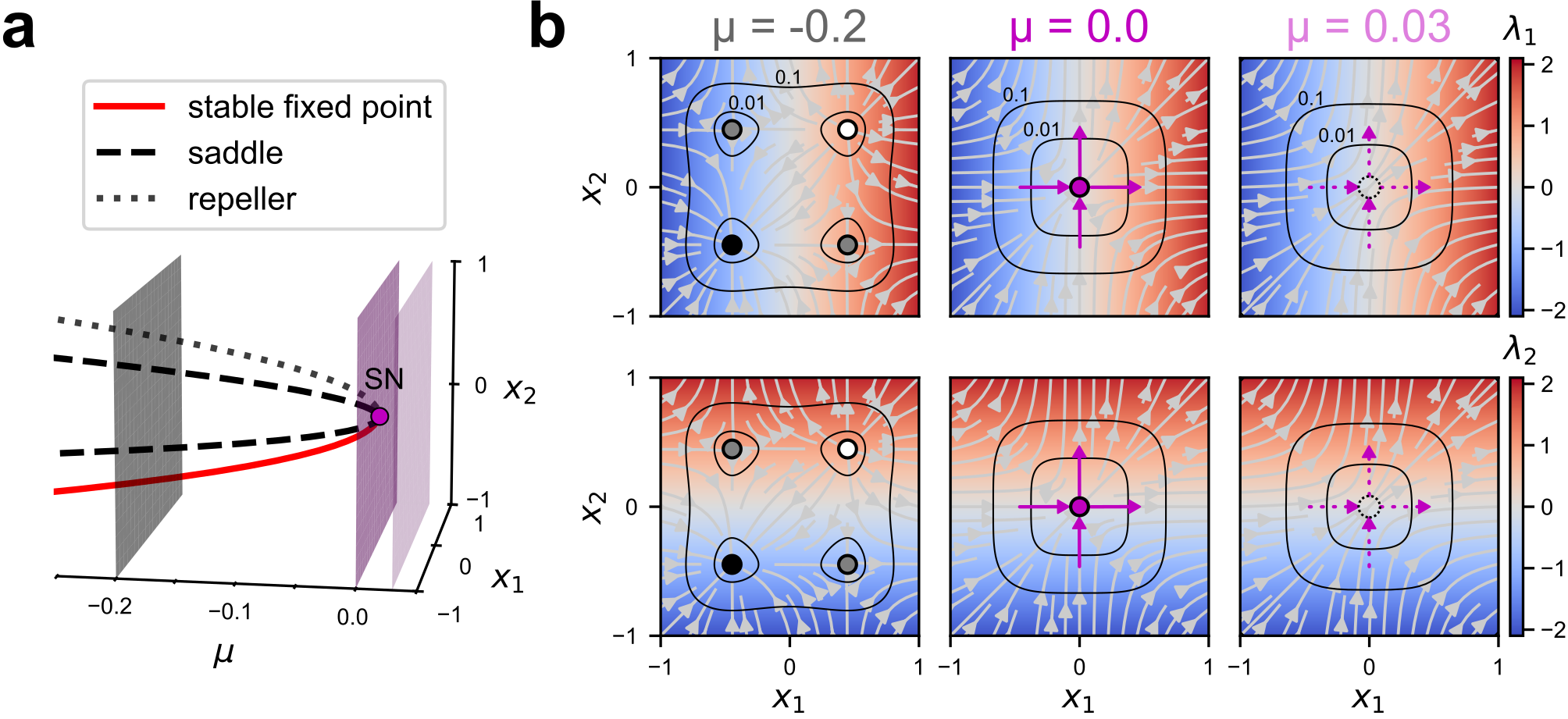}
	\caption{Local eigenvalue characteristics for fixed points and ghost of system \ref{eq:2}. \textbf{a}, saddle-node bifurcation in which four fixed points coalesce and form a type $2,0$ saddle-node. \textbf{b}, phase portrait and instantaneous eigenvalue distribution across space for different cross sections of the bifurcation diagram shown in \textbf{a}. black, grey, white and magenta dots indicate sinks, saddles and saddle-nodes, whereas dotted circles indicate ghosts. Grey arrows indicate flow. Black contour lines show different $Q$ levels, indicating regions of slow dynamics.}
	\label{sfig:2dghost}
\end{figure}

\begin{figure}
	\centering
	\includegraphics[width=\textwidth]{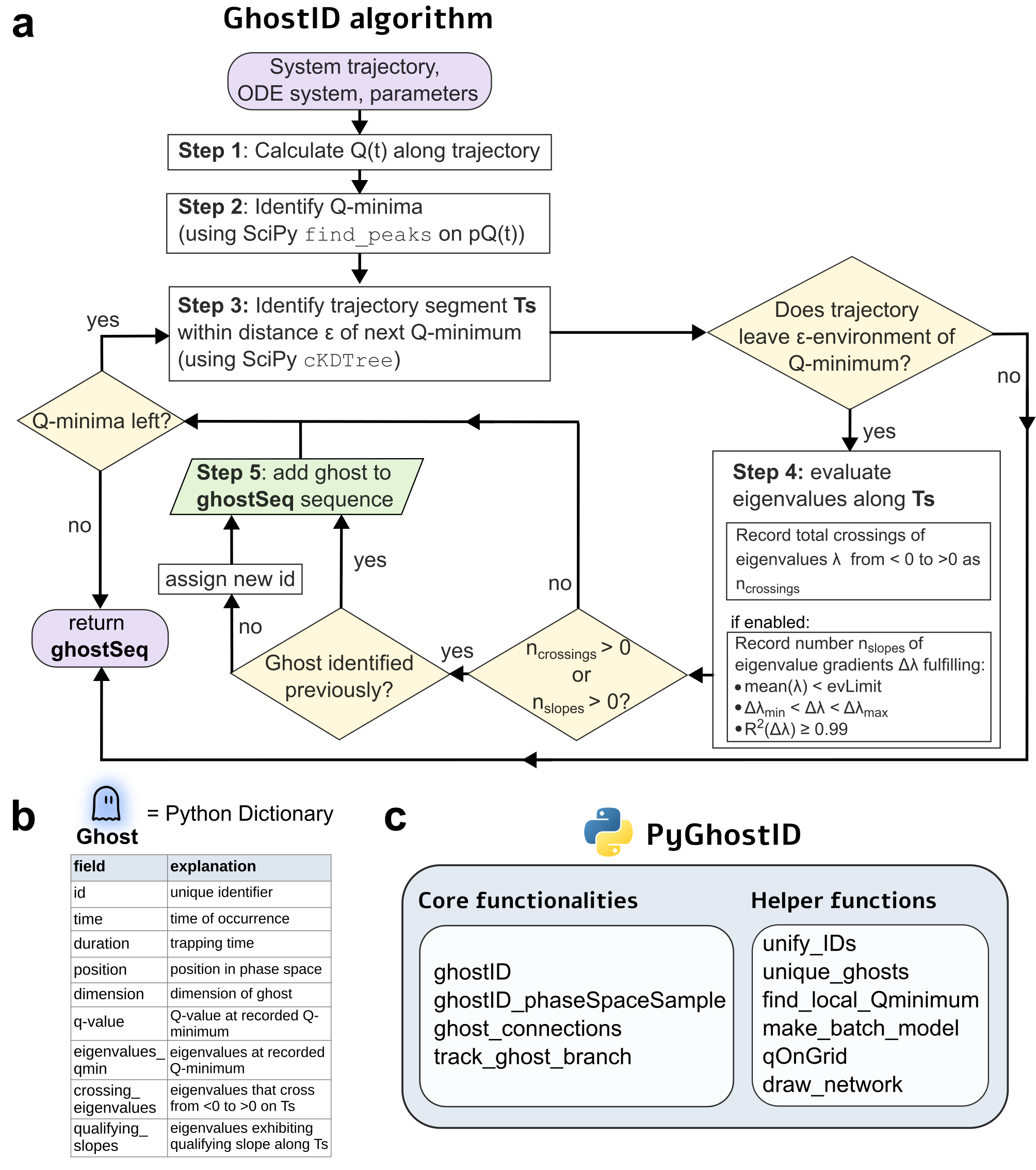}
	\caption{Algorithmic identification and characterization of ghosts. \textbf{a}, flow chart of the GhostID algorithm. \textbf{b}, information about identified ghosts stored by GhostID. \textbf{c}, functionalities offered to the user by the PyGhostID package (see text and package documentation for more details).}
	\label{sfig:algorithm}
\end{figure}

\begin{figure}[h]
\begin{center}
   \includegraphics[width=1\textwidth]{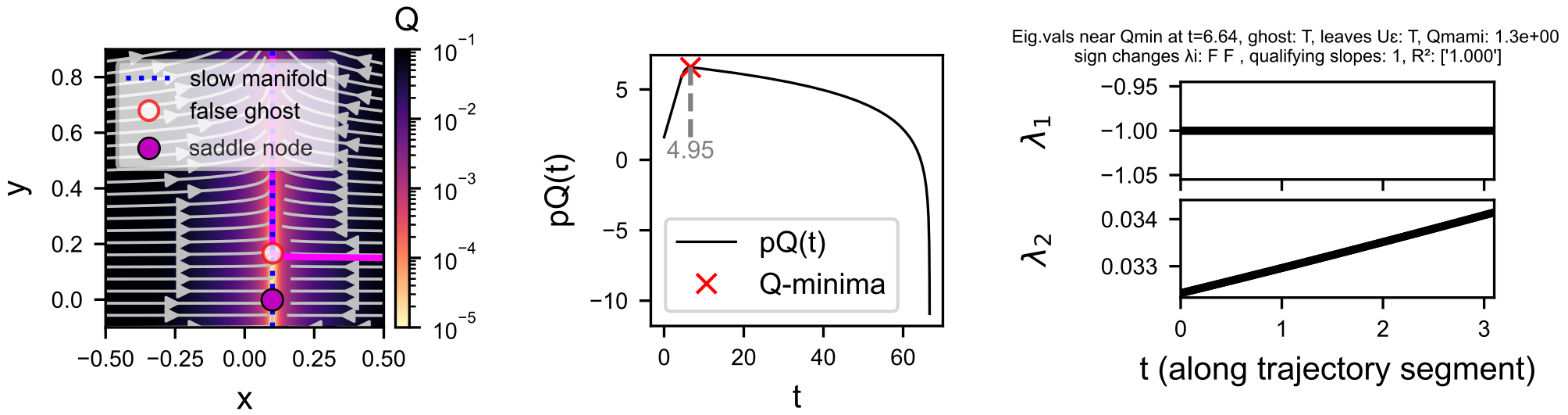}
\end{center}
\caption{False positive ghost identification with indirect method of GhostID in slow-fast toy model with a saddle-node from Kuehn \cite{Kuehn_2015} with parameter $\epsilon = 0.1$. Indirect identification was enabled via setting {\tt evLimit = 0.05} when calling PyGhostID's {\tt ghostID} function. Left: phase space. Middle: $pQ(t)$ plots along with $Q$-minima and peak prominences. Right: eigenvalues along trajectory segments and evaluation criteria.}
\label{sfig:falseGhost}
\end{figure}

\begin{figure}[h]
\begin{center}
   \includegraphics[width=1\textwidth]{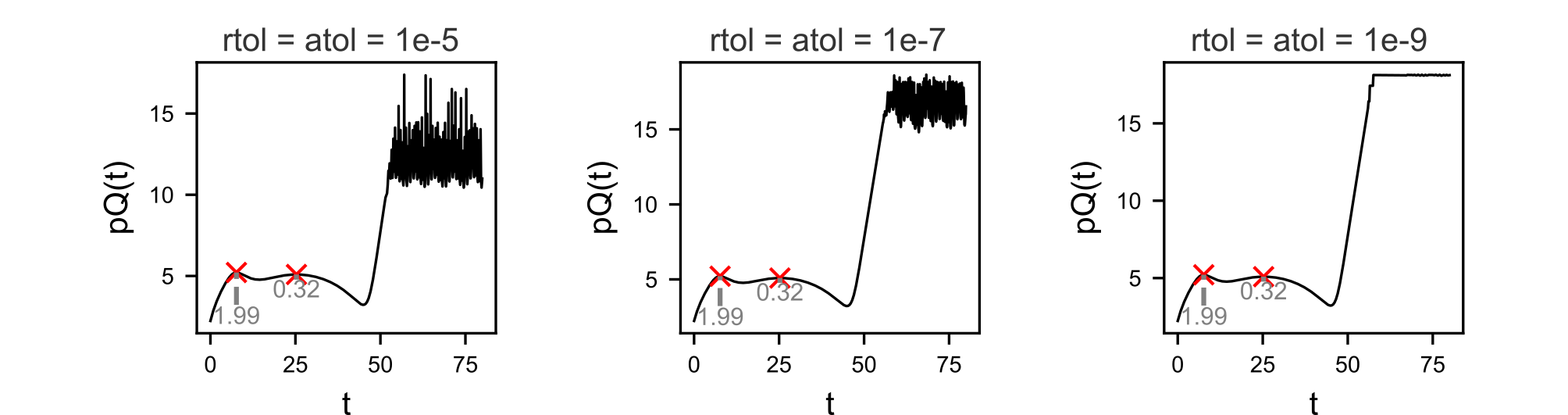}
\end{center}
\caption{Numerical noise in $Q$-values around fixed point in model from \cite{Bieg_2024}. Decreasing numerical tolerances when integrating the model before applying GhostID reduces irregularity of $pQ(t)$-timeseries.}
\label{sfig:numNoise}
\end{figure}

\begin{figure}[h]
\begin{center}
   \includegraphics[width=1\textwidth]{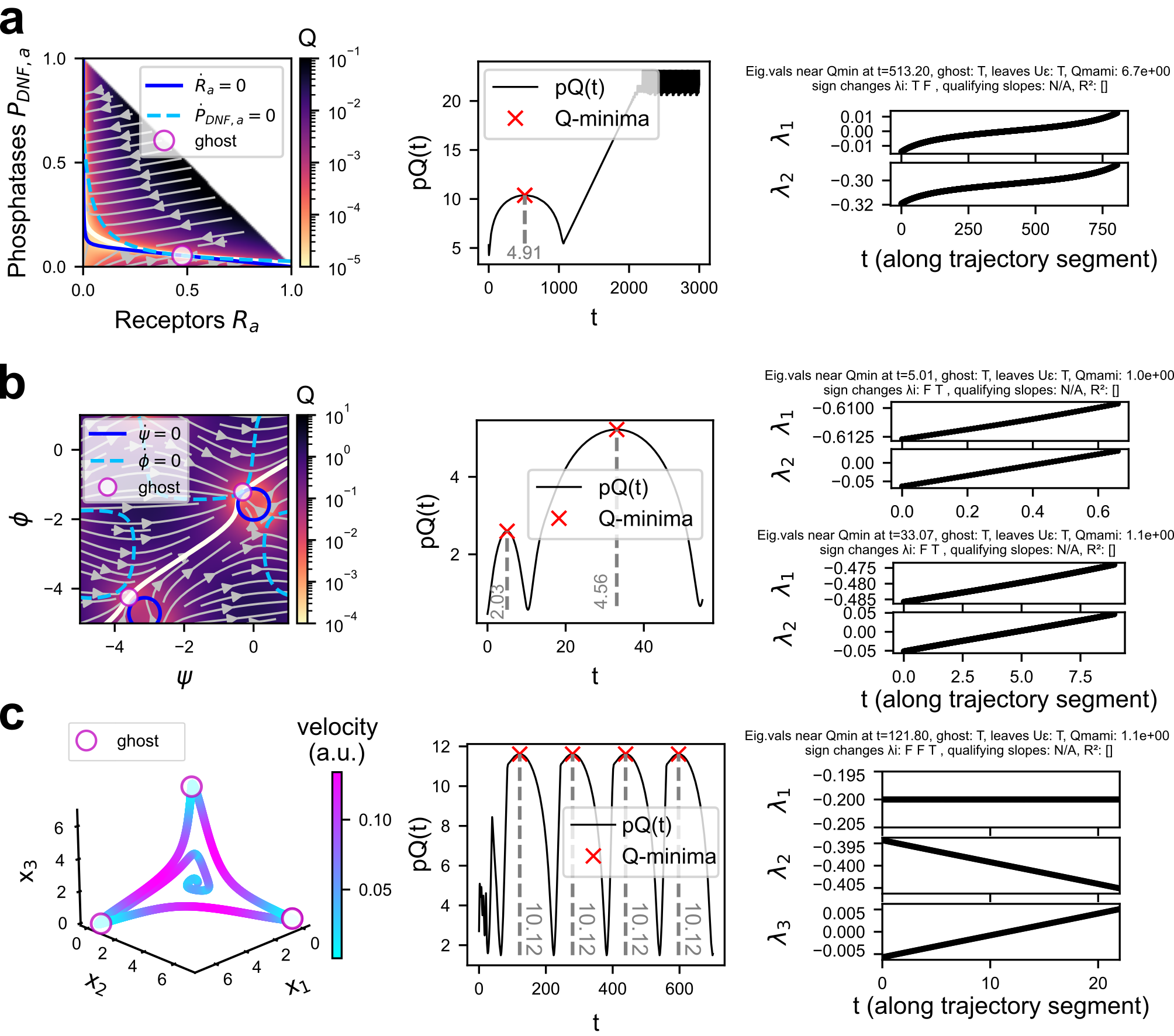}
\end{center}
\caption{Numerical validation of GhostID. \textbf{a}, ghost identified by GhostID in EGF-receptor model from \cite{Stanoev2018}. Parameter values: $LR_a=0$, $\alpha_1=0.0017$, $\alpha_2=0.3$, $\alpha_3=1$, $\hat{\gamma}_{DNF}=2.957$, $\hat{\beta}_{DNF}=36.0558$, $k_R=0.8$, $k_1=0.01$, $k_{2/1}=0.5$. \textbf{b}, ghosts identified by GhostID in charge-density wave model from \cite{Strogatz1989}. Parameter values: $E = 0.9$, $\gamma = 0.5$, $K=0.9$. \textbf{c}, ghosts identified by GhostID in gene-regulatory network model from \cite{Farjami_2021}. Parameter values: $b_1=b_2=b_3=10^{-5}$, $\alpha_1=\alpha_2=\alpha_3=9$,$\beta_1=\beta_2=\beta_3=0.1$, $h=3$, $d_1=d_2=d_3=0.2$. \textbf{a-c}, Left: phase space. Middle: $pQ(t)$ plots along with $Q$-minima and peak prominences. Right: eigenvalues along trajectory segments and evaluation criteria.}
\label{sfig:numval_ghosts}
\end{figure}

\begin{figure}[h]
\begin{center}
   \includegraphics[width=1\textwidth]{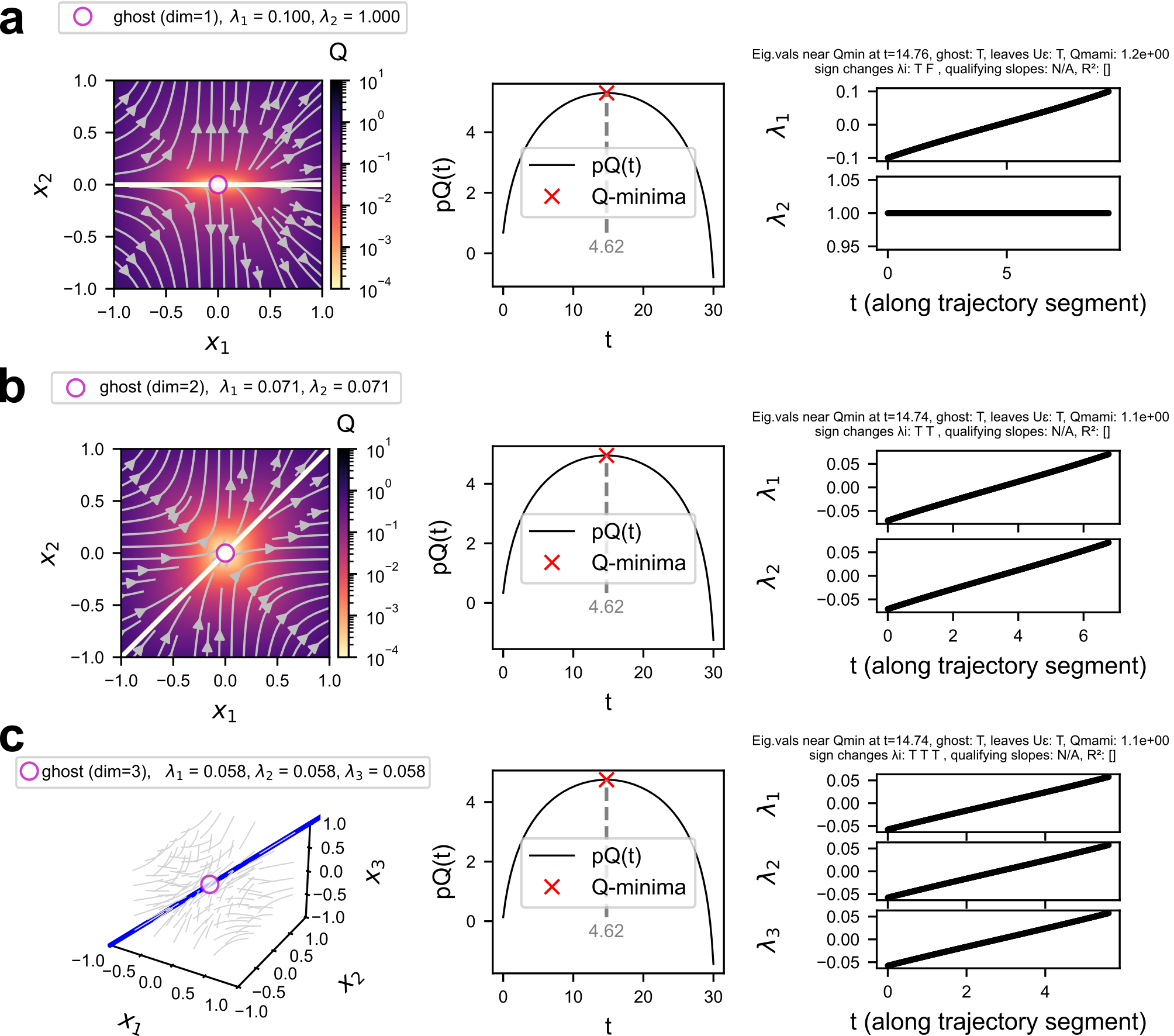}
\end{center}
\caption{Numerical validation of GhostID. New ghost types identified by GhostID in system 2. \textbf{a}, non-attracting ghost of a type $1,1$ saddle-node. Parameter values: $n = 2$, $j=1$, $k=1$, $\mu=0.01$. \textbf{b}, attracting ghost of a type $2,0$ saddle-node. Parameter values: $n = 2$, $j=2$, $k=0$, $\mu=0.01$. \textbf{c}, attracting ghost of a type $3,0$ saddle-node. Parameter values: $n = 3$, $j=3$, $k=0$, $\mu=0.01$.
\textbf{a-c}, Left: phase space. Middle: $pQ(t)$ plots along with $Q$-minima and peak prominences. Right: eigenvalues along trajectory segments and evaluation criteria.}
\label{sfig:numval_ghosts_new}
\end{figure}

\begin{figure}[h]
\begin{center}
   \includegraphics[width=1\textwidth]{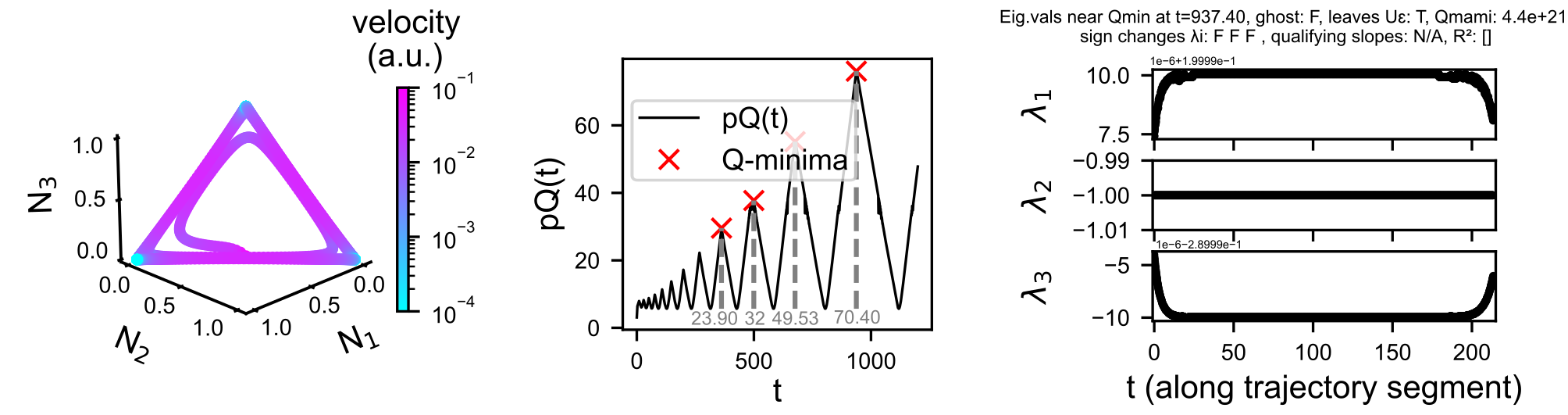}
\end{center}
\caption{Numerical validation of GhostID. 
Long transients in heteroclinic cycle from \cite{May1975} are not identified by GhostID. Parameter values: $\alpha=0.8$, $\beta=1.29$. Left: phase space. Middle: $pQ(t)$ plots along with $Q$-minima and peak prominences. Right: eigenvalues along trajectory segments and evaluation criteria.}
\label{sfig:numval_HC}
\end{figure}

\begin{figure}[h]
\begin{center}
   \includegraphics[width=1\textwidth]{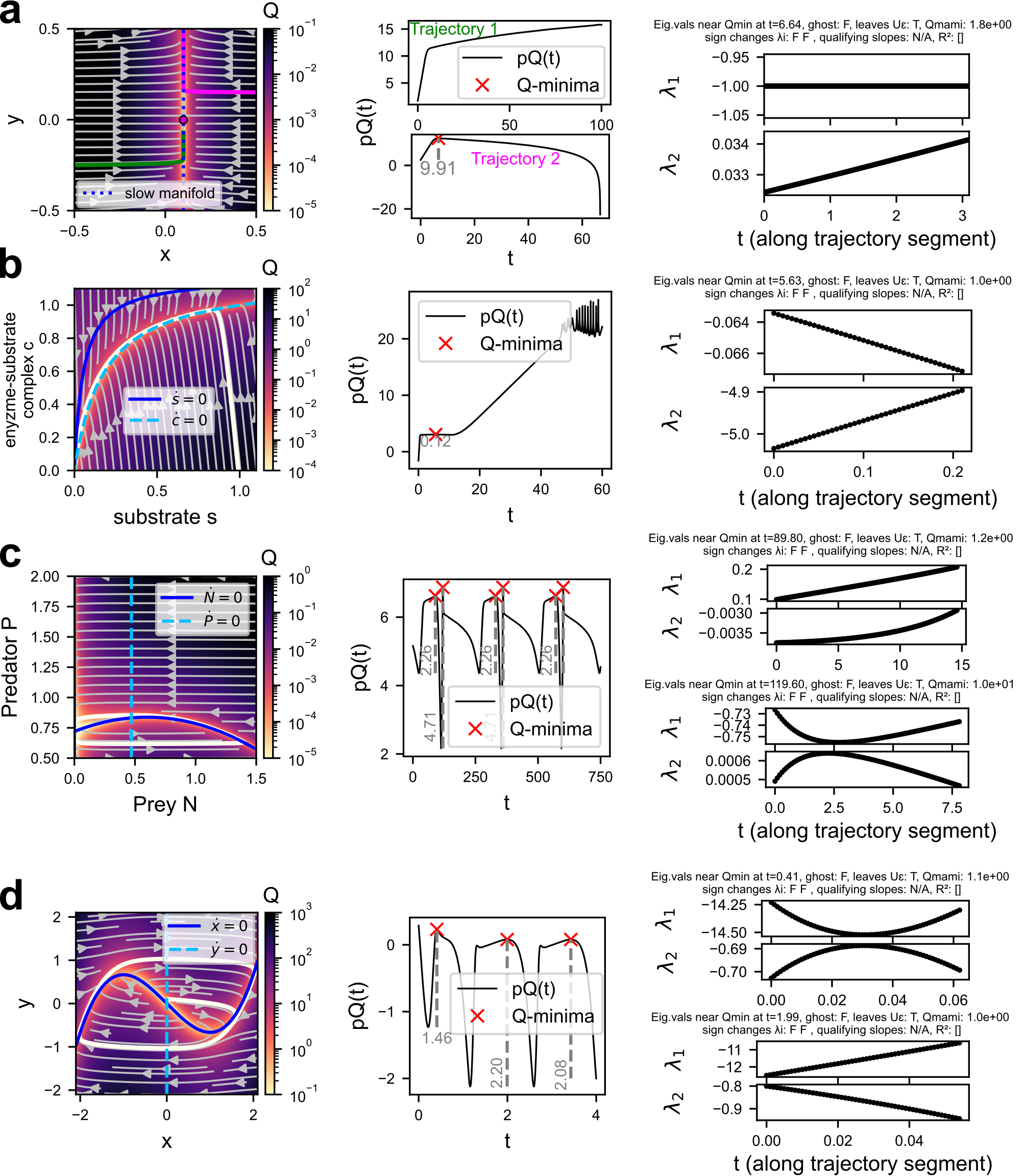}
\end{center}
\caption{Numerical validation of GhostID. 
Long transients in slow-fast systems are not identified by GhostID.
\textbf{a}, slow-fast toy model from \cite{Kuehn_2015}. Parameter values: $\epsilon = 0.1$.
\textbf{b}, Michaelis-Menten model of enzymatic catalysis from \cite{roussel2005singular}. Parameter values: $\alpha=0.85$, $\beta=0.3$, $\mu = 0.1$.
\textbf{c}, ecological model from \cite{Hastings_2018} in slow-fast regime. Parameter values: $\gamma=2.5$, $h=1$, $v=0.5$, $m=0.4$, $\alpha=1.8$, $K=2.2$, $\epsilon=0.01$
\textbf{d}, van der Pol oscillator. Parameter values: $\epsilon = 0.1$.
\textbf{a-d}, Left: phase space. Middle: $pQ(t)$ plots along with $Q$-minima and peak prominences. Right: eigenvalues along trajectory segments and evaluation criteria.}
\label{sfig:numval_SF}
\end{figure}

\begin{figure}[h]
\begin{center}
   \includegraphics[width=1\textwidth]{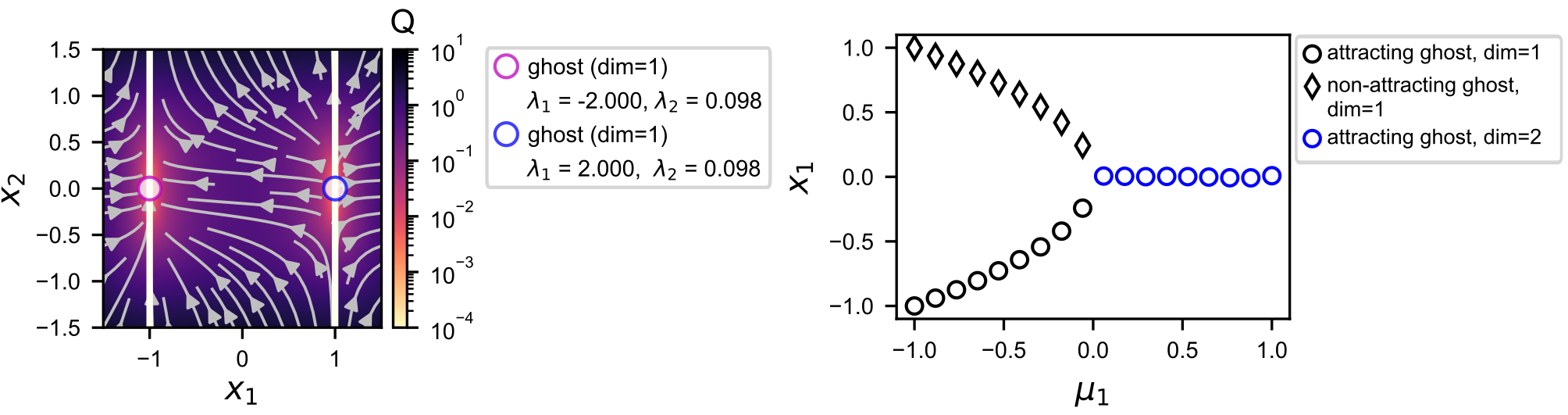}
\end{center}
\caption{Ghost saddle-node bifurcation in system (2) for. Left: phase space at parameter values $\mu_1=-1, \mu_2=0.05$. GhostID identifies two one-dimensional ghosts from the trajectories shown in white, one attracting (magenta), the other non-attracting (blue). Right: Tracking these ghosts for increasing values of $\mu_1$ shows how attracting and non-attracting ghosts come closer and closer together as $\mu_1$ increases until they merge at $\mu_1=0$ and form a two-dimensional ghost for $\mu_1>0$. We term this new type of bifurcation a ghost saddle-node bifurcation.}
\label{sfig:gSN_sys2}
\end{figure}

\FloatBarrier
\section{Use of generative AI}\label{subsec:genAIuse}

ChatGPT (GPT-5) has been used for specific coding tasks:
\begin{itemize}
    \item After a first version of GhostID was coded by D.K., GPT-5 was instructed to evaluate numerical performance, after which the following suggestions were implemented to speed up the algorithm: using JAX for vectorizing and batch-processing calculations of $pQ(t)$, finding points in a neighborhood of a given point using SciPy's {\tt cKDTree} function, fast function for linear regression used in indirect method of eigenvalue evaluation.
    \item The following functions of PyGhostID were written by GPT-5 after being given precise instructions of what the input data and the parameters are, how the output should look like, and what steps/calculations should be done for processing the input: 
    {\tt ghostID\textunderscore phaseSpaceSample, make\textunderscore batch\textunderscore model, find\textunderscore local\textunderscore Qminimum,qOnGrid,iqr\textunderscore sliding\textunderscore filter,make\textunderscore jacfun,slope\textunderscore and\textunderscore r2}
    \item Generating comments for functions
    \item Debugging
\end{itemize}

In addition, after first formulation of definitions \ref{def:SN} and \ref{def:ghost} as well as theorem \ref{theorem_vicinityHypFP} by D.K., ChatGPT (GPT-5.1) and Anthropic's Claude (Sonnet 4.6) were used to evaluate their plausibility and consistency, which led to the following changes:
\begin{itemize}
    \item Definition \ref{def:SN}: 
    \begin{itemize}
        \item (SN1) requiring eigenvalues to also have a geometric multiplicity of $j$, to distinguish from other bifurcations (e.g. Bogdanov-Takens) and ensure consistency with (SN2) and (SN3).
        \item (SN3) changed `$w_i(D_x^2f(x^*,\rho_{\textnormal{crit}})(v_i,v_i))\neq 0$ for all $1\leq i \leq j$' to `For every non-zero $v\in \ker(D_xf(x^*,\rho_{\textnormal{crit}}))$, there exists $i$ with $1\leq i \leq j$ such that $w_i(D_x^2f(x^*,\rho_{\textnormal{crit}})(v,v))\neq 0$' to make the criterion independent of the particular choice of base.
    \end{itemize}
\end{itemize}

All AI outputs have been checked carefully by the authors who take full responsibility for the code and the results presented in this study. AI has \underline{not} been used in any of the following areas of this work: Idea generation and conceptualization, numerical simulations and data analysis, writing of the manuscript, creation of figures, creation and formatting of bibliography.



\section{Supplementary References}
\printbibliography[heading=none]
\end{refsection}

\end{document}